\documentclass[letterpaper, 10 pt, conference]{ieeeconf}
\usepackage[textsize=scriptsize]{todonotes}
\IEEEoverridecommandlockouts    
\usepackage{cite}
\usepackage{amsmath,amssymb,amsfonts}
\usepackage{algorithmic}
\usepackage{graphicx}
\usepackage{textcomp}
\usepackage{xcolor}
\usepackage[normalem]{ulem}
\usepackage{caption}
\let\labelindent\relax
\usepackage{enumitem}
\usepackage{balance}

\usepackage[font={small}]{caption}

\makeatletter

\let\proof\@undefined
\let\endproof\@undefined
\makeatother

\usepackage{amsthm,bm}
\usepackage{bbm}
\usepackage{subcaption}
\usepackage{wasysym}

\usepackage{hyperref}
\usepackage[capitalize,noabbrev]{cleveref}


\usepackage{booktabs}

\newcommand{\yang}[1]{\textcolor{black}{#1}}
\newcommand{\yangz}[1]{\textcolor{black}{#1}}

\newcommand{\sy}[1]{\textcolor{black}{#1}}

\theoremstyle{plain}
\newtheorem{theorem}{Theorem}
\newtheorem{proposition}{Proposition}
\newtheorem{lemma}{Lemma}

\theoremstyle{definition}

\newtheorem{assumption}{Assumption}
\theoremstyle{remark}
\newtheorem{remark}{Remark}
\newtheorem{problem}{Problem}

\DeclareMathOperator*{\argmin}{arg\,min}
\title{\LARGE \bf
Learning Safe-by-Design Neural Network Controllers
}

\author{Yang Zhao, Jungeun Lee, Jeong hwan Jeon, Sze Zheng Yong%
\thanks{
    Y. Zhao and S.Z. Yong are with Northeastern University, Boston, USA (email: {\tt \{zhao.yang12, s.yong\}@northeastern.edu}). J. Lee and J. Jeon are with Ulsan National Institute of Science and Technology, Ulsan, Republic of Korea (email: {\tt \{jungeunlee, jhjeon\}@unist.ac.kr}).
}%
\thanks{
    This work was supported by the National Science Foundation grant CNS-2313814, and by the InnoCORE program of the Ministry of Science and ICT (\#N10250155).
}%
}

\begin{document}

\maketitle
\thispagestyle{empty}
\pagestyle{empty}

\begin{abstract}
Safety filters \yang{constructed from} control barrier functions (CBFs) are commonly appended to pre-trained neural network controllers to enforce safety requirements. However, this decoupled design \sy{with 
hand-tuned, fixed CBF parameters often} fails to adapt to the underlying controller, yielding overly conservative solutions. \yang{Thus, given a valid CBF, we address these limitations by jointly learning a neural network controller and neural-network-parameterized CBF parameters, 
enforcing the resulting affine safety constraints by construction and avoiding an online \sy{quadratic program (QP)} safety filter at run time.} To further improve computational efficiency and scalability, we introduce a lightweight projection architecture that enforces constraints without full constraint enumeration. Extensive simulation evaluations 
demonstrate reliable, scalable safety constraint satisfaction at reduced computational cost. 

\end{abstract}

\section{Introduction}

Neural networks are increasingly used as powerful function approximators for feedback control policies, enabling controllers to handle complex nonlinear dynamics and high-dimensional state spaces that are difficult to model analytically \cite{hunt1992neural,nakamura2022neural}.
However, in safety-critical systems, purely learned controllers may produce unsafe control actions that violate system constraints such as collision avoidance, actuator limits, or state safety requirements.
A common strategy to address this is to place a safety filter after the neural network controller, as shown in Fig.~\ref{fig:safety_filter} (left).
While this modular design allows learned controllers to be combined with formal safety guarantees, it may lead to 
suboptimal behavior. 

Control barrier functions (CBFs) have become one of the most widely used tools for implementing such safety filters in the process of designing control inputs \cite{ames2017control}. In the standard CBF framework, safety is enforced by imposing a constraint of the form $\dot h(x,u) + \alpha(h(x)) \ge 0$, with 
\yangz{
a CBF candidate $h(\cdot)$ and} an extended class-$\mathcal{K}$ function $\alpha(\cdot)$ that regulates how aggressively the system is pushed back toward the safe set. In practice, however, the design of this $\alpha(\cdot)$ function is often heuristic, 
e.g., by fixing 
it as a simple linear function (e.g., $\alpha(h)=\yang{\omega} h$) with a manually tuned $\yang{\omega}$, which can lead to conservative behavior or unsafe motion \cite{kim2025adapt}. 
\yang{Optimal-decay CBF methods address this limitation by introducing an adaptive decay variable into the online CBF-QP, improving flexibility over fixed-gain designs~\cite{ong2025properties}. However, they still require solving a \sy{(larger)} QP at run time.} 
\yang{BarrierNet also learns higher-order CBF components for safe robot control, but \sy{the training process is complex and expensive, and it} retains a CBF-QP layer. 
\sy{By contrast,} our goal is to embed the resulting affine safety constraints directly into the neural controller~\cite{xiao2023barriernet} \sy{with optimal-decay,} without any online QP.} 

Another line of work enforces safety constraints directly within neural network controllers.
Penalty-based methods promote the generation of safe control actions during training, but typically do not provide hard safety guarantees, especially under limited data or out-of-distribution conditions  \cite{marquez2017imposing}. 
To guarantee safety, projection-based methods  
have been proposed to enforce hard constraints by projecting the neural network output onto a feasible set defined by safety constraints. 
HardNet~\cite{min2025hardnet} 
\yang{enforces}  \yang{state-dependent} affine constraints by adding correction terms
, but requires $A(x)$ to have full row rank, limiting the number of constraints and failing to handle linearly dependent cases or non-unique projections.
More recently, constraint-affine neural networks (CAffNet) \cite{zhao2026caffnet} addressed these limitations by incorporating a constraint-affine layer with constraint decomposition and a trainable projection module, enabling arbitrary \yang{state-dependent} affine constraints with provable feasibility and universal approximation properties.
However, the projection step can be computationally expensive as the number of constraints increases, limiting its scalability to complex systems.

\begin{figure}[t]
    \centering
    \includegraphics[width=1\linewidth]{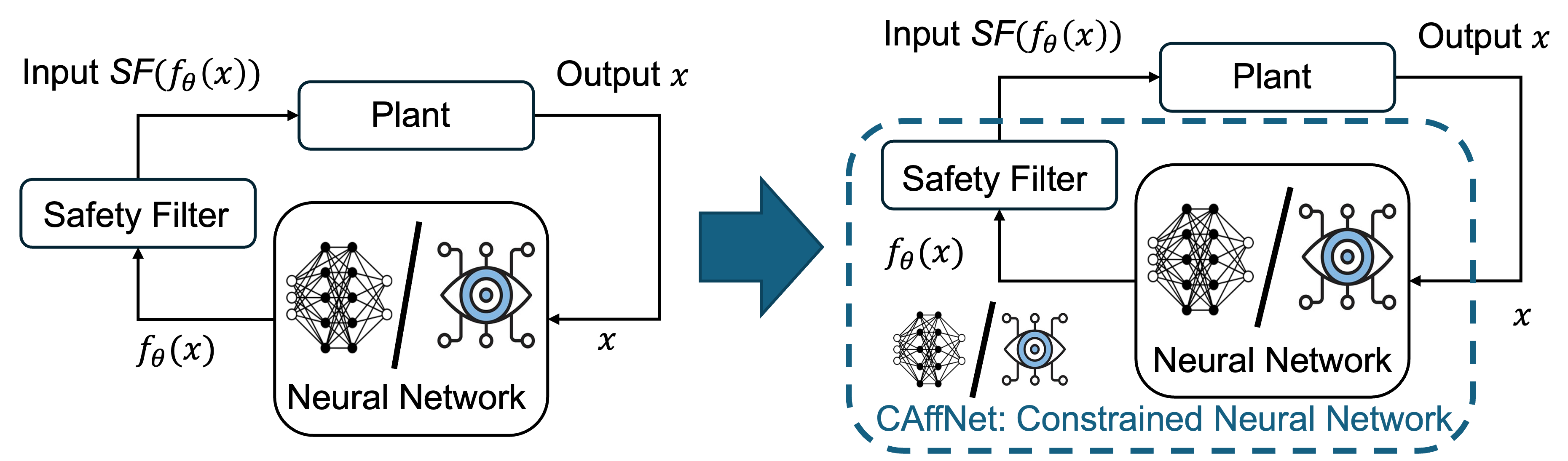}%
    \caption[]{\small Framework of \yangz{the} trainable safety filter. The proposed method jointly trains a neural network controller and CBF parameters within the CAffNet\sy{(-Lite)} architecture, ensuring hard constraint satisfaction by construction without a separate safety filtering step.} 
    \label{fig:safety_filter}
\end{figure}

Motivated by these challenges, we propose a safe-by-design neural network control framework that integrates trainable CBFs with CAffNet(-Lite) in Fig.~\ref{fig:safety_filter} (right).
\yang{The main contributions are summarized as follows:}
\begin{enumerate}[wide, labelwidth=!, labelindent=0pt]
\item \yang{We introduce CAffNet-Lite, an extension of CAffNet with a lightweight projection architecture that reduces computation time while preserving hard affine constraint satisfaction and 
its universal approximation property.}
\item \sy{Given a valid CBF under input constraints, we leverage CAffNet-Lite to train a neural network controller that satisfies an optimal-decay-like property by construction, enabling real-time deployment without the computational overhead and reliability concerns of solving \yangz{a} QP online.}
\item \sy{We prove that the resulting piecewise continuous control satisfies strong invariance of the CBF in the Filippov sense.}
\end{enumerate}



\section{Modeling and Problem Statement}

Consider a nonlinear control-affine system
\begin{align}
\dot{x} &= f(x) + g(x)u,
\label{eq:dynamics}
\end{align}
where $x \in \mathbb{R}^{n}$ denotes the system state and $u \in \mathcal{U} \subseteq \mathbb{R}^{m}$ the control input, with a \sy{polytope} input set, 
i.e., $\mathcal{U} = \{u \in \mathbb{R}^{m} \mid P u \le q\}$.
The functions $f:\mathbb{R}^{n} \rightarrow \mathbb{R}^{n}$ and $g:\mathbb{R}^{n} \rightarrow \mathbb{R}^{n \times m}$ are assumed to be locally Lipschitz continuous.

To characterize system safety, let $h:\mathbb{R}^{n} \rightarrow \mathbb{R}$ be a continuously differentiable function that defines the safe set
\begin{align}
\mathcal{C}
&=
\{x \in \mathbb{R}^{n} \mid h(x) \ge 0\},
\label{eq:safeset}
\end{align}
where $\mathcal{C}$ satisfies some desired state constraints $\mathcal{X}$, i.e., $\mathcal{C} \subseteq \mathcal{X}$, e.g., for collision avoidance.
The objective is to ensure forward invariance of $\mathcal{C}$, meaning that if the system starts inside the safe set, it remains inside for all future time.
A common approach to enforce safety is through control barrier functions (CBFs). A function $h(x)$ is a CBF if there exists an extended class-$\mathcal{K}$ function $\alpha(\cdot)$ such that for all $x \in \mathcal{C}$:
\begin{align}
\sup_{u \in \mathcal{U}}
\left[
L_f h(x) + L_g h(x)u
\right]
\ge -\alpha(h(x)),
\label{eq:CBF_condition}
\end{align}
where $L_f h(x) = \tfrac{\partial h(x)}{\partial x} f(x)$ and $L_g h(x) = \tfrac{\partial h(x)}{\partial x} g(x)$ are the Lie derivatives.
Condition \eqref{eq:CBF_condition} ensures that the safe set $\mathcal{C}$ is forward invariant under an appropriate control input.
In practice, the overall (safety) CBF constraints are often enforced by modifying a nominal control input $u_{\mathrm{nom}}(x)$ through a quadratic program (QP):
\begin{align}
\label{eq:CBF_qp}
\begin{aligned}
u^*(x)
&=
\argmin_{u \in \mathcal{U}}
\|u - u_{\mathrm{nom}}(x)\|^2_2 \\
& \text{s.t.}\quad
\begin{aligned}
     -L_g h(x)u &\le L_f h(x) + \alpha(h(x)).
\end{aligned}
\end{aligned}
\end{align}

Since the CBF condition is affine in 
$u$ and the input set $\mathcal{U}$ is \sy{polytopic} by assumption, the constraints in \eqref{eq:CBF_qp} can be written equivalently as 
\yang{state}-dependent affine constraints:
\begin{align}\label{eq:affine_constraints}
    \yang{\mathcal{S}(x) := \{u \mid A(x)u \le b(x)\}},
\end{align}
with $A(x)=\begin{bmatrix}-L_gh(x)\\P\end{bmatrix}$ and $b(x)=\begin{bmatrix}L_f h(x) + \alpha(h(x))\\q\end{bmatrix}$.

The QP safety filter in \eqref{eq:CBF_qp} provides a principled mechanism to enforce safety constraints, but solving the optimization problem online can introduce computational overhead and may degrade control performance due to frequent corrections to the nominal controller. Moreover, manually choosing and fixing the CBF parameters, i.e., the $\alpha(\cdot)$ function, may diminish the performance of the nominal controller. 

Thus, 
this letter aims to jointly design a neural network-based controller and neural-network-parameterized CBF parameters \sy{(similar to optimal-decay CBFs \cite{ong2025properties})} that satisfy the CBF safety constraints by construction without requiring an online QP safety filter. Formally, the problem is as follows:
\begin{problem}[Safe-by-Design Neural Network Control]
\label{prob:safe_control}
Consider the control-affine system in \eqref{eq:dynamics} with state $x \in \mathbb{R}^{n}$ and control input $u \in \mathcal{U} \subseteq \mathbb{R}^{m}$. Let the safe set be defined by \eqref{eq:safeset} \sy{with a given valid/non-empty input-constrained CBF}. 
The objective is to design a neural network controller $u = \pi_\theta(x) \in \mathcal{U} $ together with trainable CBF parameters as a neural network $\alpha_\vartheta(x,h(x))$ replacing the function $\alpha(h(x))$ in  \eqref{eq:CBF_qp} such that the safe set $\mathcal{C}$ is \sy{(strongly)} forward invariant.
\end{problem}

To solve this problem, we develop a safe-by-design neural network controller based on a \sy{novel lightweight CAffNet-Lite that extends} the CAffNet architecture in our prior work \cite{zhao2026caffnet}, with a trainable control affine constraint.

\section{Main Approach}

In this section, we first introduce CAffNet-Lite in Section \ref{sec:CAffNet-Lite}, an improved safe-by-design neural network control framework inspired by~\cite{zhao2026caffnet}, that integrates a constraint-affine 
layer with neural networks (NNs) to guarantee satisfaction of state-dependent control affine constraints in \eqref{eq:affine_constraints} by construction, which we employ to solve Problem \ref{prob:safe_control} in Section \ref{sec:sNNcontroller}. \sy{Moreover, we prove  its capability to enforce  strong forward invariance of a given safe set in the Filippov sense, and thus, to address Problem~\ref{prob:safe_control}.} 


\subsection{Constrained Neural Network Design via CAffNet-Lite} \label{sec:CAffNet-Lite}
The constrained neural network framework (cf. Fig. \ref{fig:safety_filter} (right)) has the state measurement $x$ as an input $x$ and generates the feedback control input $u$ as its output, and 
can be parameterized by any trainable universal function approximator $f_\theta(x)$ establishing a mapping from state $x$ to control $u$, including common architectures such as feedforward neural networks or transformers \cite{park2021minimum, yun2019transformers}. \sy{Specifically, the proposed CAffNet-Lite extends our prior affine-constrained neural network architecture in \cite{zhao2026caffnet} by improving its scalability} with an improved constraint decomposition strategy, which will be introduced in Section~\ref{sec:cons_decomp}. 
Then, in Section~\ref{sec:projection}, we present the projection architecture, inherited and recapped from CAffNet \cite{zhao2026caffnet}, that integrates the unconstrained output, the null-space component, and the decomposed constraints to guarantee hard constraint satisfaction and feasibility of the resulting control policy.



\subsubsection{\yang{Lightweight} Constraint Decomposition}\label{sec:cons_decomp}

Given that \eqref{eq:affine_constraints} with $n_c$ number of constraints forms a polytope, its boundary is defined by a maximum $m$ number of active constraints. Then, as with CAffNet \cite{zhao2026caffnet}, we  enforce the constraints defined by~\eqref{eq:affine_constraints}  by  decomposing $A(x)$, $b(x)$ into subconstraints. However, instead of the CAffNet approach \cite{zhao2026caffnet} that selects all possible combinations of active constraints up to order $\min(n_c, m)$, we propose CAffNet-Lite that only considers 
\yang{subconstraints with either the minimum or maximum order, i.e.,}
$k \in \{1, \min{(n_c, m)}\}$, 
\yang{
with a (reduced) set of indices:
\begin{align*}
    {\Gamma} := \{
        \gamma 
        \mid 
        k \in \bigl\{1, \min{(n_c, m)} \bigr\}, 1 \leq j_1 < \dots < j_k \leq n_c
    \},
\end{align*}
\yangz{where $\gamma = (j_1, \dots, j_k)$ is the index sequence, and the original constraint is decomposed into subconstraints with each $\gamma \in \Gamma$},
i.e., with 
\sy{
the corresponding submatrix of $A(x)$ and subvector of $b(x)$ with rows $a_{j_i}^\top$ and $b_{j_i}$ selected by $\gamma$ as $A_{\gamma} = 
\begin{bmatrix}
a_{j_1} & \hdots & a_{j_k}
\end{bmatrix}^\top$ and $b_{\gamma} = 
\begin{bmatrix}
b_{j_1} &  \hdots & b_{j_k}
\end{bmatrix}^\top$.}
\sy{Note that we do not require $A_\gamma$ to have full row rank.}}

Under this new design, the total number of considered constraint subsets is $n_c$ when $\min(n_c,m)=1$, and otherwise $n_c + \tbinom{n_c}{\min(n_c,m)}$, \yang{which yields a worst-case polynomial complexity of $O(n_c^{m})$ with respect to $n_c$}. \sy{By contrast, CAffNet \cite{zhao2026caffnet}  that selects all possible combinations of active constraints yields $\sum_{k=1}^{\min(n_c,m)} \binom{n_c}{k}$ subconstraints, which is upper bounded by $2^{n_c}-1$ and thus, has worst-case exponential complexity $O(2^{n_c})$ with respect to $n_c$; consequently, the number of projection candidates for CAffNet is reduced 
for systems with many safety constraints or high-dimensional control inputs, leading to significantly increased scalability.} 

\subsubsection{Projection with Sub-Constraints} \label{sec:projection}

The projection architecture is \sy{adopted from CAffNet \cite{zhao2026caffnet}}, under the following assumption on the continuity and feasibility of the constraints.
\begin{assumption}\label{assumption:cont_feas_const}
For any 
\yang{state} $x \in \mathbb{R}^{n}$, $A(x)$ and $b(x)$ are continuous in $x$, and the feasible region \yang{$\mathcal{S}(x) := \{y \in \mathbb{R}^{m} \mid A(x) y \le b(x)\}$} is non-empty.
\end{assumption}
With sub-constraints ($A_\gamma(x),b_\gamma(x))$ from Section~\ref{sec:cons_decomp}, the projected output is given by:
\begin{align}
\begin{array}{rl}
\yangz{\mathcal{P}_\gamma} (x) 
= \!\!\!
& f_{\theta}(x) - A_{\gamma}^\dagger(x) (A_{\gamma}(x) f_{\theta}(x) - b_{\gamma}(x)) \\
&+ (I - A_{\gamma}^\dagger (x) A_{\gamma} (x)) \yangz{w_\phi(x)},
\end{array}\label{eq:projSub}
\end{align}
where $A_{\gamma}^\dagger(x)$ is the pseudoinverse of $A_{\gamma}(x)$, 
$f_\theta(x) \in \mathbb{R}^{m}$ is the output of an unconstrained NN, and $w_\phi(x) \in \mathbb{R}^{m}$ is a learned null-space component \sy{
that allows for selecting different feasible points on the hyperplane defined by the sub-constraint when the solution of $A_{\gamma}(x)y = b_{\gamma}(x)$ is not unique (instead of only allowing orthogonal/parallel projections).}
Then, the feasible candidate set is  defined by selecting all projections that are inside the safe region, i.e.,
\begin{align}\label{eq:feasible_proj_set_lite}
    \yangz{\mathcal{S}_{\mathcal{P}}(x) :=
    \left\{
    \mathcal{P}_{\gamma}(x)
    \,\middle|\,
    \gamma \in \Gamma,\;
    A(x)\mathcal{P}_{\gamma}(x) \le b(x)
    \right\}}.
\end{align}
Finally, CAffNet-Lite outputs the unconstrained network $f_\theta(x)$ when feasible; otherwise it selects the projection from \eqref{eq:feasible_proj_set_lite} that is closest to $f_\theta(x)$ under any $p$-norm ($\|\cdot\|_p$), i.e.,
\begin{align}\label{eq:y_star_lite}
    {\mathcal{P}}^*(x) 
    \sy{\,\in} 
    \begin{cases}
        f_\theta(x), & \text{if } \yangz{f_\theta(x) \in \mathcal{S}(x)}, \\
        \displaystyle\arg\!\!\!\! \min_{y \in  \yangz{{\mathcal{S}}_{\mathcal{P}} (x)}} \! \|y - f_\theta(x)\|_p, & \text{otherwise}.
    \end{cases}
\end{align}
\sy{Note that in the event of a tie, the tie can be arbitrarily broken. Moreover,} CAffNet-Lite does not require full-rank or linearly independent constraints. 


The projected output with the improved constraints decomposition in Section \ref{sec:cons_decomp} is 
guaranteed to be feasible. 

\begin{lemma}[Existence of a Feasible Solution]\label{lemma:feasible_sol_proj}
For any 
\yang{state} $x \in \mathbb{R}^{n}$, if the feasible region \yang{$\mathcal{S}(x) := \{y \in \mathbb{R}^{m} \mid A(x)y \le b(x)\}$} is non-empty, then there exists at least one candidate 
\yang{$y \in \{\mathcal{P}_{\gamma}(x) \mid \yang{\gamma \in  {\Gamma}}\}$} such that $A(x)y \le b(x)$.
\end{lemma}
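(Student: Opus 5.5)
The plan is to find one index set $\gamma\in\Gamma$ for which $\mathcal{P}_\gamma(x)$ is feasible for \emph{every} choice of $f_\theta(x)$ and $w_\phi(x)$. The basic tool is the following identity. If the system $A_\gamma y=b_\gamma$ is consistent, then $A_\gamma A_\gamma^\dagger b_\gamma=b_\gamma$ and $A_\gamma(I-A_\gamma^\dagger A_\gamma)=0$. Applying these to \eqref{eq:projSub} gives $A_\gamma\mathcal{P}_\gamma(x)=b_\gamma(x)$. So $\mathcal{P}_\gamma(x)$ lies in the affine set $E_\gamma:=\{y\mid A_\gamma y=b_\gamma\}$, whatever $f_\theta$ and $w_\phi$ are. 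It therefore suffices to produce a $\gamma\in\Gamma$ with $\emptyset\neq E_\gamma\subseteq\mathcal{S}(x)$. I fix $x$ and drop it from the notation.

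\emph{Step 1 (minimal face).} $\mathcal{S}$ is a nonempty polyhedron, so it has a nonempty minimal face $F$. By standard polyhedral theory, $F=\{y\mid A_I y=b_I\}$ is an affine set, where $I$ is the set of constraints active on $F$. Its dimension is $d=\dim\ker A$, the lineality dimension, and $\operatorname{rank}A_I=m-d$. In the pointed case, $d=0$ and $F$ is a vertex.

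\emph{Step 2 (choose $\gamma$).} Let $k^*=\min(n_c,m)$.
\begin{itemize}
\item If $|I|\ge k^*$, I take $\gamma\subseteq I$ with $|\gamma|=k^*$ containing $m-d$ linearly independent rows of $A_I$, which is possible because $m-d\le\operatorname{rank}A\le k^*$. Then $\operatorname{rank}A_\gamma=\operatorname{rank}A_I$, and since $\gamma\subseteq I$, $E_\gamma=F\subseteq\mathcal{S}$.
\item If some single active constraint $j\in I$ already satisfies $\{y\mid a_j^\top y=b_j\}=F$, I use $\gamma=(j)$, which has size $1$.
\end{itemize}
In the most common case, $\mathcal{S}$ is pointed and $n_c\ge m$. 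Then $A_\gamma$ is square and invertible, $\mathcal{P}_\gamma$ is exactly the vertex $A_\gamma^{-1}b_\gamma$, and the argument is immediate. The case $n_c<m$ follows the same way, with $\gamma$ equal to all constraints whenever $I$ is the full index set.

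\emph{Step 3 (main obstacle).} The hard step is the degenerate situation where $|I|<k^*$ and no single hyperplane equals $F$. Here every admissible $\gamma$ either includes constraints inactive on $F$, so that $E_\gamma$ may be empty or cut through infeasible points, or is a single hyperplane larger than $F$. The free null-space term $w_\phi$ could then push $\mathcal{P}_\gamma$ outside $\mathcal{S}$.

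I would first try to rule this out by showing the following. If $|I|<k^*$, one can instead pick a different minimal-face-like structure, for example a face obtained after quotienting by the lineality space, on which $k^*$ constraints are active. Alternatively, the rows outside $I$ that are added to pad $\gamma$ can be chosen so that $E_\gamma$ is a nonempty subset of $F$ (a lower-dimensional slice of $F$), which still lies in $\mathcal{S}$. This works provided the padded rows restricted to $F$ are consistent. I expect to need genericity of $F$'s slices here and would check carefully whether the statement implicitly relies on such a slicing argument. If it does not go through in general, the fallback is to prove the lemma under the condition that $\mathcal{S}$ is pointed or $|I|\ge k^*$, and flag the remaining degenerate configuration.
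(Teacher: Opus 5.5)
Your Steps 1--2 are correct. They are a careful version of what the paper's proof does: the paper asserts that the minimal face of $\mathcal{S}(x)$ is captured by a projection ``with maximum order $k=\min(n_c,m)$'', which tacitly assumes your first bullet always applies.

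\textbf{The gap is Step 3, and no slicing or quotienting argument can close it, because in that configuration the statement is false.} Take $m=n_c=3$, with rows $a_1=a_3=e_1^\top$, $a_2=e_2^\top$ and $b=(0,0,1)^\top$. Then $\mathcal{S}=\{y_1\le 0,\ y_2\le 0\}\neq\emptyset$. Its unique minimal face is $F=\{y_1=y_2=0\}$, with active set $\{1,2\}$, and $\Gamma=\{(1),(2),(3),(1,2,3)\}$. Write $\mathcal{P}_\gamma=A_\gamma^\dagger b_\gamma+(I-A_\gamma^\dagger A_\gamma)s$ with $s:=f_\theta(x)+w_\phi(x)$. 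The four candidates are:
\begin{itemize}
\item $\mathcal{P}_{(1)}=(0,s_2,s_3)$;
\item $\mathcal{P}_{(2)}=(s_1,0,s_3)$;
\item $\mathcal{P}_{(3)}=(1,s_2,s_3)$;
\item $\mathcal{P}_{(1,2,3)}=(\tfrac{1}{2},0,s_3)$, since the full system is inconsistent with $A^\dagger b=(\tfrac{1}{2},0,0)^\top$ and $I-A^\dagger A=e_3e_3^\top$.
\end{itemize}
Now take $f_\theta(x)=(1,1,0)^\top\notin\mathcal{S}$ and $w_\phi(x)=0$. Every candidate is infeasible, so $\mathcal{S}_{\mathcal{P}}(x)=\emptyset$.

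Your padding idea fails here because the only row outside the active set reads $0=1$ on $F$. Choosing a different face does not help either, since $\Gamma$ is fixed by the architecture. The intermediate-order subset $(1,2)$, which full CAffNet keeps and which would return a point of $F$, is exactly what CAffNet-Lite discards.

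\textbf{So the right outcome is your fallback, not a proof of the lemma as written.} Add the hypothesis that $\mathcal{S}(x)$ is pointed (equivalently $\operatorname{rank}A(x)=m$, which forces $n_c\ge m$). Then your first bullet is already a complete proof:
\begin{itemize}
\item the minimal face is a vertex $v$;
\item any $m$ linearly independent active rows give an invertible $A_\gamma$ with $|\gamma|=m=\min(n_c,m)$, so $\gamma\in\Gamma$;
\item $\mathcal{P}_\gamma(x)=A_\gamma^{-1}b_\gamma=v$ for every $f_\theta$ and $w_\phi$.
\end{itemize}
This hypothesis holds in the paper's CBF setting, because $\mathcal{S}(x)\subseteq\mathcal{U}$ and $\mathcal{U}$ is a bounded polytope. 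Hence the hard-constraint proposition and the invariance theorem survive there. The lemma as stated for general $A(x),b(x)$, however, must be restricted accordingly.
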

\begin{proof}
    Based on \yang{the definition of the minimal face~\cite[Theorem 8.4]{schrijver1998linear} and} the proof of feasibility in CAffNet~\cite[Appendix A]{zhao2026caffnet}, 
    the projection onto the minimal face of the polyhedron \yang{(with maximum order $k = \min{(n_c, m)}$)} is, \yang{by definition}, within the feasible region.
    \yang{Such index set satisfies $ {\Gamma}_k = \{\gamma \mid k = \min(n_c, m)\} \subseteq  {\Gamma}$ by construction. \yangz{Therefore}, the corresponding projection set $\{ \mathcal{P}_{\gamma}(x) \mid \gamma \in  {\Gamma}_{\min(n_c, m)} \}$ is non-empty. Thus, with the definition of the minimal face, this projection set satisfies the constraints in~\eqref{eq:affine_constraints}, which indicates~\eqref{eq:feasible_proj_set_lite} is non-empty and Lemma~\ref{lemma:feasible_sol_proj} holds}.
\end{proof}

\sy{Then, since all solutions of \eqref{eq:y_star_lite}  satisfy $\mathcal{S}(x)$, which exist by Lemma~\ref{lemma:feasible_sol_proj}, the following holds directly.}

\begin{proposition}[Hard Constraint Satisfaction] \label{theorem:const_sat_of_CAffNet}
For any 
\yang{state} $x \in \mathbb{R}^{n}$, under Assumption~\ref{assumption:cont_feas_const}, the output of CAffNet-Lite $\mathcal{P}^*(x)$ in~\eqref{eq:y_star_lite} satisfies the \yang{state}-dependent affine constraints, i.e., $A(x) \mathcal{P}^*(x) \le b(x)$.
\end{proposition}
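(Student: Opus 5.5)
The argument is a two-case analysis that follows the definition of $\mathcal{P}^*(x)$ in \eqref{eq:y_star_lite}, with Lemma~\ref{lemma:feasible_sol_proj} guaranteeing that the selection step is well defined. Fix an arbitrary state $x \in \mathbb{R}^n$. By Assumption~\ref{assumption:cont_feas_const}, the set $\mathcal{S}(x)$ is non-empty, so the hypothesis of Lemma~\ref{lemma:feasible_sol_proj} holds at this $x$.

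\emph{Case 1: $f_\theta(x) \in \mathcal{S}(x)$.} Here the first branch of \eqref{eq:y_star_lite} applies and $\mathcal{P}^*(x) = f_\theta(x)$. By the definition of $\mathcal{S}(x)$, this gives $A(x)\mathcal{P}^*(x) \le b(x)$ immediately.

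\emph{Case 2: $f_\theta(x) \notin \mathcal{S}(x)$.} The output is now a minimizer of $\|y - f_\theta(x)\|_p$ over $\mathcal{S}_{\mathcal{P}}(x)$. First I will show that this minimizer exists. Lemma~\ref{lemma:feasible_sol_proj} provides some $\gamma \in \Gamma$ with $A(x)\mathcal{P}_\gamma(x) \le b(x)$, so $\mathcal{S}_{\mathcal{P}}(x)$ in \eqref{eq:feasible_proj_set_lite} is non-empty. Since $\Gamma$ is a finite index set, $\mathcal{S}_{\mathcal{P}}(x)$ has at most $|\Gamma|$ elements. A real-valued function on a finite non-empty set attains its minimum, so the argmin is non-empty, and any tie-breaking rule returns one of its elements.

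Then I will show the minimizer is feasible. Every element of $\mathcal{S}_{\mathcal{P}}(x)$ satisfies $A(x)y \le b(x)$ by the defining filter in \eqref{eq:feasible_proj_set_lite}. Since $\mathcal{P}^*(x) \in \mathcal{S}_{\mathcal{P}}(x)$, it follows that $A(x)\mathcal{P}^*(x) \le b(x)$.

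The only nontrivial step is the non-emptiness of $\mathcal{S}_{\mathcal{P}}(x)$ in Case 2, and that is exactly what Lemma~\ref{lemma:feasible_sol_proj} provides through the reduced index set $\Gamma$, which still contains all subsets of maximal order $\min(n_c,m)$. The rest is bookkeeping. The continuity part of Assumption~\ref{assumption:cont_feas_const} is not needed for this pointwise statement; only non-emptiness of $\mathcal{S}(x)$ is used. Since $x$ was arbitrary, the conclusion holds for all $x \in \mathbb{R}^n$.
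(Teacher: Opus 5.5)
Your proposal is correct and matches the paper's argument: the paper also derives the proposition directly from Lemma~\ref{lemma:feasible_sol_proj}, since the first branch of \eqref{eq:y_star_lite} is feasible by the definition of $\mathcal{S}(x)$, and the second branch selects from the filtered set $\mathcal{S}_{\mathcal{P}}(x)$ in \eqref{eq:feasible_proj_set_lite}, which is non-empty by that lemma. Your extra points spell out what the paper's one-line justification leaves implicit: a minimum over a finite non-empty set is attained, ties are harmless, and only the non-emptiness part of Assumption~\ref{assumption:cont_feas_const} is used.
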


\sy{Furthermore, it can be shown that CAffNet-Lite preserves the universal approximation property of CAffNet.}

\begin{proposition}[Performance Preservation]\label{lemma:performance_preservation}
\yang{Let $y^*(x)$ be an optimal output of CAffNet (that has the universal approximation property \cite[Theorem 3.5]{zhao2026caffnet}). 
CAffNet-Lite in~\eqref{eq:y_star_lite} can approximate $y^*(x)$ arbitrarily well.}
\end{proposition}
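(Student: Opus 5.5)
The approach is to show that any output achievable by CAffNet can also be produced by CAffNet-Lite with suitable trainable components $f_\theta$ and $w_\phi$. Universal approximation of those components then transfers to the constrained output. The key structural fact is that the maximum-order subconstraints $\Gamma_{\min(n_c,m)}\subseteq\Gamma$, together with the free null-space term $(I-A_\gamma^\dagger A_\gamma)w_\phi$, can reach every feasible point that CAffNet reaches through an intermediate-order subconstraint.

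\textbf{Case split.} Fix $x$ and a target $y^*(x)\in\mathcal{S}(x)$, an optimal CAffNet output, which is feasible by \cite[Theorem 3.5]{zhao2026caffnet}.

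\emph{Interior or unconstrained case.} If $y^*(x)$ is itself an unconstrained output of CAffNet, i.e., $f_\theta(x)\in\mathcal{S}(x)$, then CAffNet-Lite returns the same $f_\theta(x)$ by \eqref{eq:y_star_lite}. Universal approximation of $f_\theta$ (e.g., \cite{park2021minimum}) gives an arbitrarily close output.

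\emph{Projected case.} Suppose instead $y^*(x)=\mathcal{P}_{\gamma'}(x)$ for some $\gamma'$ of intermediate order $1<k<\min(n_c,m)$. I will show that $y^*(x)$ lies in the image of \eqref{eq:projSub} for some $\gamma\in\Gamma$. Since $y^*(x)$ is feasible, the paper's first Proposition and Lemma~\ref{lemma:feasible_sol_proj} apply. Extend $\gamma'$ to a maximal index set $\gamma\supseteq\gamma'$ of order $\min(n_c,m)$. If the rows of $A_\gamma$ add no new independent directions beyond $\gamma'$ on the face containing $y^*$, the same point is obtained with $\gamma$ by choosing $w_\phi(x)$ so that $(I-A_\gamma^\dagger A_\gamma)w_\phi(x)$ equals the null-space displacement of $y^*$.

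More generally, I would pick $f_\theta$ and $w_\phi$ directly. Take $f_\theta(x)$ to approximate $y^*(x)$ itself. If $y^*(x)$ is feasible, the first branch of \eqref{eq:y_star_lite} fires and the output is $f_\theta(x)\approx y^*(x)$. If $y^*(x)$ lies on the boundary, choose $f_\theta(x)=y^*(x)+\epsilon d(x)$ with $d$ an outward direction. The order-$1$ projection onto an active constraint then returns a point within $O(\epsilon)$ of $y^*(x)$, and the closest feasible candidate is at distance $O(\epsilon)$. Setting $w_\phi$ to approximate $y^*$ keeps null-space components aligned.

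\textbf{Uniform approximation.} On a compact set of states, continuity (Assumption~\ref{assumption:cont_feas_const}) and uniform approximation of $f_\theta,w_\phi$ yield $\sup_x\|\mathcal{P}^*(x)-y^*(x)\|_p\le\varepsilon$. The argmin selection is $1$-Lipschitz-like in the sense that the selected candidate is no farther from $f_\theta$ than any other feasible candidate. Since the perturbed target itself is close to one candidate, the triangle inequality gives a bound of $2\epsilon$.

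\textbf{Main obstacle.} The hard step is the boundary case. One must ensure that some feasible Lite candidate lies near $y^*(x)$ uniformly in $x$, even though the pseudoinverse $A_\gamma^\dagger$ is discontinuous where rank changes. I would first try the perturbation argument above, which only needs order-$1$ projections. Points where $y^*$ lies on several constraints simultaneously would be handled by an order-$\min(n_c,m)$ projection plus the null-space term. The measure-zero rank-change set would be handled by noting that approximation is required only in the $L^p$ sense of \cite[Theorem 3.5]{zhao2026caffnet}.
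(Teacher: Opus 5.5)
You handle the unconstrained case exactly as the paper does. The boundary case, which carries all the content, does not go through as written.

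\textbf{Where the boundary argument fails.} Extending $\gamma'$ to a superset of order $\min(n_c,m)$ goes the wrong way, because the added rows pin the candidate to hyperplanes that $y^*$ need not lie on. Your null-space choice is also wrong. For any row $j$ with $a_jy^*=b_j$, \eqref{eq:projSub} gives $\mathcal{P}_{(j)}=y^*+(I-a_j^\dagger a_j)(f_\theta-y^*+w_\phi)$. So $w_\phi$ acts as a displacement, not a target point, and $w_\phi\approx y^*$ leaves an error of about $(I-a_j^\dagger a_j)y^*$ rather than $O(\epsilon)$.

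More seriously, even with $w_\phi=0$ the candidate $y^*+\epsilon(I-a_j^\dagger a_j)d$ is feasible only if $a_{j'}(I-a_j^\dagger a_j)d\le 0$ for every other active row $j'$. For a generic outward $d$ this fails, precisely in the intermediate-order situation that Lite discards. Concretely, take $\mathcal{S}=[-1,1]^3$ ($n_c=6$, $m=3$), $y^*=(1,1,0)$, $d=(1,1,0)/\sqrt{2}$ and $w_\phi=0$. Both order-$1$ projections onto the active faces violate the other active constraint. Every feasible order-$1$ or order-$3$ candidate lies at distance at least $1$ from $y^*$, however small $\epsilon>0$ is. So your claim that the closest feasible candidate is $O(\epsilon)$ away is false, and the order-$\min(n_c,m)$ fallback does not rescue it. The $L^p$ remark does not help either: $y^*(x)$ sits on the boundary on any open set of states where CAffNet actually projects, so this is not a measure-zero phenomenon.

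\textbf{The missing idea.} The paper uses an exact identity that reduces to a \emph{subset} of $\gamma^*$ rather than a superset. As the paper notes, $y^*=\mathcal{P}_{\gamma^*}$ satisfies $A_{\gamma^*}y^*=b_{\gamma^*}$. Pick one row $j^*\in\gamma^*$ with $a_{j^*}y^*=b_{j^*}$ and set the null-space term to the displacement $w_\phi=y^*-f_\theta$. Then $\mathcal{P}_{(j^*)}=y^*-a_{j^*}^\dagger(a_{j^*}y^*-b_{j^*})=y^*$, so $y^*$ is itself a feasible Lite candidate and no perturbation or margin argument is needed.

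Within your retargeting framework, the equivalent repair is $d=a_j^\top$ for a single active row $j$, together with $(I-a_j^\dagger a_j)w_\phi=0$. This again gives $\mathcal{P}_{(j)}=y^*$ exactly, and your argmin step then yields $\|\mathcal{P}^*-y^*\|_p\le 2\|f_\theta-y^*\|_p$. That combination is worth keeping. Your triangle-inequality argument controls which candidate the argmin selects, a point the paper's proof leaves implicit: with $w_\phi=y^*-f_\theta$, every Lite candidate becomes $y^*-A_\gamma^\dagger(A_\gamma y^*-b_\gamma)$, and another feasible one could be closer to $f_\theta$.

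Be aware that both versions rely on $f_\theta$ or $w_\phi$ taking these values exactly. With genuine approximation error, the single-row candidate can slide off the face as in the box example. A fully approximate statement would therefore also need an interior margin.
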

\begin{proof}  
\yang{If $y^*(x)=f_\theta(x)$, then $f_\theta(x)\in \mathcal{S}(x)$ by \eqref{eq:y_star_lite}, and CAffNet-Lite returns the same output as CAffNet. Otherwise, suppose $y^*(x)=\mathcal{P}_{\gamma^*}(x)$. If $\gamma^*\in {\Gamma}$, the same candidate is already included in CAffNet-Lite. Thus, it remains to consider an intermediate-order subset $\gamma^*\in\Gamma^o\setminus {\Gamma}$, where $\Gamma^o$ is the set of indices in CAffNet \cite[above Eq. (2)]{zhao2026caffnet}. Since $y^*(x)$ satisfies $A_{\gamma^*}(x)y^*(x)=b_{\gamma^*}(x)$, there exists a single-constraint index $\tilde{\gamma}^* = (j^*)$, where $j^*$ is an entry of $\gamma^*$, such that $a_{j^*}(x)y^*(x)=b_{j^*}(x)$. In addition, there exists a target null-space function $w_t^*(x)=y^*(x)-f_\theta(x)$. By the universal approximation property of NNs, $w_\phi(x)$ can approximate $w_t^*(x)$ arbitrarily well. With $w_\phi(x)=w_t^*(x)$ and omitting the argument $x$, the CAffNet-Lite projection for the single constraint with the index $(j^*)$ recovers $y^*$, since $\mathcal{P}_{\tilde{\gamma}^*}
= f_\theta -a_{j^*}^\dagger (a_{j^*}f_\theta -b_{j^*}) + (I-a_{j^*}^\dagger a_{j^*})w_t^* = y^* - a_{j^*}^\dagger
(a_{j^*}y^* - b_{j^*})
= y^*$.}
\end{proof}

\subsection{Neural Network Based Safety Filter}\label{sec:sNNcontroller}

To overcome the limitations of a fixed-parameter class-$\mathcal{K}$ function in the standard CBF formulation, we propose to learn the CBF parameters  jointly with the neural network controller within the CAffNet-Lite architecture. Specifically, \sy{similar to optimal-decay CBFs \cite{ong2025properties}}, we replace the class-$\mathcal{K}$ function $\alpha(\cdot)$ in the CBF condition \sy{in \eqref{eq:CBF_condition}} with a neural network parameterization $\alpha_\vartheta(x,h(x))$ \sy{that satisfies $\alpha_\vartheta(x,0)=0$ at the safety boundary when $h=0$ to enforce invariance}, yielding more flexible safety constraints that better capture the system dynamics and safety requirements. 

Under this construction, we have trainable \yang{state-dependent} affine constraints $A(x) u \le b_\vartheta(x)$ with:
\begin{align}
\begin{gathered}
    A(x)\!=\! \begin{bmatrix}-L_gh(x)\\P\end{bmatrix}\!,\
    b_\vartheta(x) \!=\! \begin{bmatrix}L_fh(x) \!+\! \alpha_\vartheta(x,h(x))\\q\end{bmatrix}\!,
\end{gathered}
\label{eq:CBF_affine_constraints}
\end{align}
which, for simplicity \sy{and similar to optimal-decay CBF \cite{ong2025properties}}, can be imposed by choosing $\alpha_\vartheta(x,h(x)) = \tilde{\alpha}_\vartheta(x) \sy{\alpha(h(x))}$ with $\tilde{\alpha}_\vartheta (x)$ being a trainable (unconstrained) neural network \sy{and the original $\alpha(\cdot)$}. Future work will explore how to more intelligently impose this constraint for better performance.

\yang{Unlike the standard CBF that 
relies on continuous or sufficiently regular closed-loop dynamics, the CAffNet-Lite controller in \eqref{eq:y_star_lite} may be discontinuous since the $\argmin$ selection in the projection layer can have ties. We thus establish invariance in the Filippov sense, i.e., 
every Filippov solution satisfies the CBF inequality almost everywhere.} 

\begin{theorem}[Safe-by-Design Neural Network Controllers]
Let the control input be given by \eqref{eq:y_star_lite}, with the \yang{state-dependent} affine constraints defined by \eqref{eq:CBF_affine_constraints}. 
\yang{Assume that $0$ is a regular value of $h$, $\alpha_\vartheta(x,h)$ is locally Lipschitz in $h$ with $\alpha_\vartheta(x,0)=0$ 
and that $K_\vartheta(x):=\{u\mid A(x)u\le b_\vartheta(x)\}$ is nonempty, $\forall x\in\mathbb R^n$.}
Then, the resulting \sy{CAffNet-Lite} controller \sy{with ${\mathcal{P}}^*$ in \eqref{eq:y_star_lite}} guarantees \sy{strong} forward invariance of the safe set $\mathcal{C}$ and thus, solves Problem~\ref{prob:safe_control}.
\end{theorem}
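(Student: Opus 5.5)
The plan is to pass from the discontinuous closed loop to its Filippov regularization and to show that the CBF inequality holds for every element of the Filippov set-valued map. By Proposition~\ref{theorem:const_sat_of_CAffNet}, which applies because $K_\vartheta(x)$ is nonempty and $A$, $b_\vartheta$ are continuous (Assumption~\ref{assumption:cont_feas_const}), the control $u(x)=\mathcal{P}^*(x)$ satisfies $A(x)u(x)\le b_\vartheta(x)$ for every $x$. In particular, $u(x)\in K_\vartheta(x)\subseteq\mathcal{U}$ and
\begin{align*}
L_fh(x)+L_gh(x)u(x)+\alpha_\vartheta(x,h(x))\ge 0 .
\end{align*}
The closed loop is $\dot x=F(x):=f(x)+g(x)u(x)$. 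Here $u$ may be discontinuous, but it is locally bounded because it lies in the polytope $\mathcal{U}$, which we assume bounded (or else locally bounded by construction of $\mathcal{P}^*$). Hence the Filippov map $\mathcal{F}[F](x)=\bigcap_{\delta>0}\bigcap_{\mu(N)=0}\overline{\mathrm{co}}\,F(B_\delta(x)\setminus N)$ is nonempty, compact, convex and upper semicontinuous, and Filippov solutions exist.

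The key step is to show that for every $x$ and every $v\in\mathcal{F}[F](x)$ we have $\nabla h(x)^\top v\ge -\alpha_\vartheta(x,h(x))$. Since $f$ and $g$ are continuous, every such $v$ can be written as $v=f(x)+g(x)\bar u$ with $\bar u\in\bigcap_{\delta}\overline{\mathrm{co}}\,u(B_\delta(x)\setminus N)$; this uses the standard calculus rule $\mathcal{F}[f+gu]\subseteq f+g\,\mathcal{F}[u]$ for continuous $f,g$. Each $u(z)$ with $z$ near $x$ lies in $K_\vartheta(z)$, and $K_\vartheta(z)$ is a polyhedron defined by inequalities whose data $(A,b_\vartheta)$ are continuous in $z$. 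Taking convex combinations and limits $z\to x$ therefore gives $A(x)\bar u\le b_\vartheta(x)$: each inequality $a_j(z)^\top u\le b_j(z)$ is preserved under convex combinations at points $z_i$, up to an error $\max_i\|a_j(z_i)-a_j(x)\|\,\|u(z_i)\|+|b_j(z_i)-b_j(x)|$ that vanishes as $\delta\to0$ by local boundedness. This requires $\alpha_\vartheta$ to be continuous in $x$ as well, which holds for the product form $\tilde\alpha_\vartheta(x)\alpha(h(x))$. I expect this upper-semicontinuity argument, which passes the constraint from the closed convex hull to the limit, to be the main technical obstacle. The first thing to try is to show that the graph of $K_\vartheta$ is closed and that $u$ is locally bounded.

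Given that bound, for any Filippov solution $x(t)$ the function $t\mapsto h(x(t))$ is absolutely continuous. For almost every $t$,
\begin{align*}
\tfrac{d}{dt}h(x(t))=\nabla h(x(t))^\top\dot x(t)\ge -\alpha_\vartheta\bigl(x(t),h(x(t))\bigr).
\end{align*}
This is the CBF inequality in the Filippov sense, and it holds for all solutions, which gives the ``strong'' qualifier. For invariance, suppose $h(x(t_0))\ge0$ and that $h(x(t_1))<0$ at some later time. Let $\tau=\sup\{t\le t_1: h(x(t))\ge0\}$, so that $h(x(\tau))=0$ and $h<0$ on $(\tau,t_1]$. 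On the compact trajectory segment, $\alpha_\vartheta(x,\cdot)$ is Lipschitz in $h$ with some constant $L$ (local Lipschitzness plus compactness, uniformly in $x$ along the segment), and $\alpha_\vartheta(x,0)=0$. Hence $-\alpha_\vartheta(x,h)\ge -L|h|=Lh$ for $h<0$, so $\dot h\ge Lh$ almost everywhere on $[\tau,t_1]$. A comparison lemma (Grönwall) for absolutely continuous functions then gives $h(x(t))\ge h(x(\tau))e^{L(t-\tau)}=0$, a contradiction. The regular-value assumption on $h$ ensures $\partial\mathcal{C}=\{h=0\}$ and $\mathrm{int}\,\mathcal{C}=\{h>0\}$, so $\mathcal{C}$ is exactly the sublevel-type set handled above. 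Thus every Filippov solution starting in $\mathcal{C}$ remains in $\mathcal{C}$ for its whole interval of existence. Combined with $u\in\mathcal{U}$, this yields strong forward invariance and solves Problem~\ref{prob:safe_control}.
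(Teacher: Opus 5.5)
Your proposal is correct and follows essentially the same route as the paper: pass to the Filippov regularization, show that every element of the convexified set satisfies the CBF inequality because the constraint is linear in $u$, apply the a.e.\ chain rule to $h(x(t))$, and close with a comparison argument that uses $\alpha_\vartheta(x,0)=0$ and local Lipschitzness. Your version is more explicit at the step the paper simply asserts ``by construction'': you show that limiting control values stay in $K_\vartheta(x)$ using continuity of $(A,b_\vartheta)$ and local boundedness, not the piecewise structure of $\mathcal{P}^*$, and you replace the cited comparison lemma with a direct Gr\"onwall contradiction argument.
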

\begin{proof}
\yang{
Let $X(x)=f(x)+g(x)\mathcal P^*(x)$ \sy{be the resulting piecewise continuous `velocity' and consider the differential inclusion $\dot x\in\mathcal F[X](x)$, where $\mathcal F[X]=\overline{\text{co}}\{f(x)+g(x)k_i(x)\mid i\in I(x)\}$ is its Filippov regularization~\cite{filippov1988existence} given by the closed convex hull of all limiting control inputs $k_i(x)$  in \eqref{eq:y_star_lite} with active set $I(x)$ (right hand side of \eqref{eq:y_star_lite})}. 
Filippov solutions exist locally under measurability and local essential boundedness of $X$~\cite[Prop.~3]{cortes2008discontinuous}; these follow from the \sy{finite number of `pieces' of \eqref{eq:y_star_lite}} 
and the bounded input set $\mathcal U$.
Let $x(\cdot)$ be any Filippov solution with $x(0)\in\mathcal C$. For a.e. $t$, write $x=x(t)$ \sy{and by construction, each limiting solution $k_i(x)$ in \eqref{eq:y_star_lite} satisfies $K_\vartheta(x)(x)$, i.e., $L_fh(x)+L_gh(x)
k_i\ge-\alpha_\vartheta(x,h(x))$. Thus, for their convex hull $\bar u=\sum_i\lambda_i k_i$, and since the constraint in $K_\vartheta(x)(x)$ is linear in $u$,}
$L_fh(x)+L_gh(x)\bar u =\sum_i\lambda_i(L_fh(x)+L_gh(x)k_i) \ge-\alpha_\vartheta(x,h(x))$
for a.e. $t$. 
Since $h$ is continuously differentiable and $x(\cdot)$ is absolutely continuous, $h(x(t))$ satisfies the above scalar differential inequality a.e. Finally, the locally Lipschitz NN parameterization of $\alpha_\vartheta$ and the condition $\alpha_\vartheta(x,0)=0$ \sy{(hence, $\alpha_\vartheta$ is a minimal function \cite[Corollary 1]{konda2020characterizing})} enable the comparison lemma ~\cite[Lemma~3.4]{khalil2002nonlinear} 
to give $h(x(t))\ge0$ for all $t$ on the interval of existence.}
\end{proof}

\begin{remark}[Vector-valued CBFs]
\yang{
Vector-valued CBFs are interpreted componentwise: $\mathcal C=\bigcap_{j\in\mathcal J}\mathcal C_j$, where $\mathcal C_j=\{x\mid h_j(x)\ge0\}$ and $h_j$ is continuously differentiable. The affine constraint in \eqref{eq:CBF_affine_constraints} is replaced by the stacked CBF constraints. Compared with the scalar case, 
we require simultaneous feasibility, $K_\vartheta(x)\neq\emptyset$, e.g., with MFCQ-type compatibility. 
Under the stacked feasibility condition, the theorem applies componentwise to each $h_j$ along the same Filippov solution.}
\end{remark}

\section{Simulation Results}
In this section, we demonstrate
the advantages of jointly learning the control policy and the safety conditions with our proposed safe-by-design neural network control framework.
Following the setup in \cite{zhao2026caffnet}, we use a penalty cost of $100\text{ReLU}(A(x)f_{net}(x)-b(x))$ as the soft constraint for the standard neural network and HardNet.
We additionally apply the proposed framework using \yang{feedforward neural networks, namely CAffNet and CAffNet-Lite with the network hyperparameters 
specified in each example.} For all simulations, we use the Adam optimizer with a learning rate of 0.0001 and the 2-norm in \eqref{eq:y_star_lite}. Each simulation is repeated five times with different random seeds.
All models are implemented and trained using PyTorch~\cite{paszke2019pytorch}. The training and testing are performed with NVIDIA Tesla V100-SXM2-32GB.

Performance is evaluated based on three primary metrics: total loss, constraint violations, and computational overhead. The total loss is defined as the Mean Squared Error (MSE) between the controller output and the nominal command, averaged across all training samples. Constraint satisfaction is quantified by $r = \operatorname{ReLU}(A(x) u - b(x))$, for which we report the maximum value, the mean, and the percentage of violations across the test set.
Computational efficiency is assessed via the average training time per epoch ($T_{\text{train}}$ \yangz{in milliseconds (ms)}) and the execution time per test set ($T_{\text{test}}$ \yangz{in seconds (s)}).
In the following tables, main entries represent the mean across five independent runs, while values in parentheses denote the standard deviation.

\subsection{Single Integrator System with Collision Avoidance}

\begin{table}[t]
  \caption{Results of learning safe control policies for a single integrator system \sy{with parenthetical values representing standard deviations}. 
  \yang{CAffNet and CAffNet-Lite achieve the lowest mean cost} while strictly satisfying all trained constraints. Both NN and HardNet have some constraint violations. QP with fixed CBF parameters has a relatively higher cost and 
  a longer testing time.}
  \label{CBF-table}
  \begin{center}
      \begin{sc}
\resizebox{\linewidth}{!}{%
\begin{tabular}{lcccccc}
\toprule
Method ($\omega$) & Cost & \multicolumn{3}{c}{Inequality violation} & $T_{train}$ & $T_{test}$ \\
\cmidrule(lr){3-5}
 &  & Max & Mean & Num (\%) & (ms) & (s) \\
\midrule
QP & 1095.09 & \textbf{0.00} & \textbf{0.00} & \textbf{0.00\%} & \textbf{0.00} & 14.46 \\
($\omega=0.1$) & (0.00) & (0.00) & (0.00) & (0.00\%) & (0.00) & (0.51) \\
OD-QP & 574.23 & \textbf{0.00} & \textbf{0.00} & \textbf{0.00\%} & \textbf{0.00} & 13.26 \\
($\omega=0.1$) & (0.00) & (0.00) & (0.00) & (0.00\%) & (0.00) & (1.30) \\
QP & 378.67 & \textbf{0.00} & \textbf{0.00} & \textbf{0.00\%} & \textbf{0.00} & 13.57 \\
($\omega=10$) & (0.00) & (0.00) & (0.00) & (0.00\%) & (0.00) & (0.23) \\
OD-QP & 378.71 & \textbf{0.00} & \textbf{0.00} & \textbf{0.00\%} & \textbf{0.00} & 14.00 \\
($\omega=10$) & (0.00) & (0.00) & (0.00) & (0.00\%) & (0.00) & (1.18) \\
NN ($\tilde{\alpha}_\vartheta$) & $1.90 \times 10^7$ & 5.59 & 0.51 & 2.67\% & 15.22 & \textbf{0.44} \\
 & $(2.24 \times 10^7)$ & (3.48) & (0.32) & (1.35\%) & (0.17) & (0.00) \\
HardNet ($\tilde{\alpha}_\vartheta$) & $5.02 \times 10^5$ & 1.13 & 0.10 & 2.97\% & 24.74 & 6.98 \\
 & $(1.86 \times 10^5)$ & (0.14) & (0.01) & (0.46\%) & (0.47) & (0.14) \\
CAffNet ($\tilde{\alpha}_\vartheta$) & \textbf{226.03} & \textbf{0.00} & \textbf{0.00} & \textbf{0.00\%} & 95.88 & 11.34 \\
 & (95.07) & (0.00) & (0.00) & (0.00\%) & (1.64) & (0.73) \\
CAffNet-Lite & \textbf{226.03} & \textbf{0.00} & \textbf{0.00} & \textbf{0.00\%} & 95.57 & 11.30 \\
($\tilde{\alpha}_\vartheta$) & (95.07) & (0.00) & (0.00) & (0.00\%) & (1.65) & (1.05) \\
\bottomrule
\end{tabular}
}
      \end{sc}
  \end{center}
\end{table}

\begin{figure}[t]
    \centering
    \begin{subfigure}[t]{0.16\textwidth}
        \centering
        \includegraphics[width=\linewidth,trim=0mm 5mm 0mm 2mm]{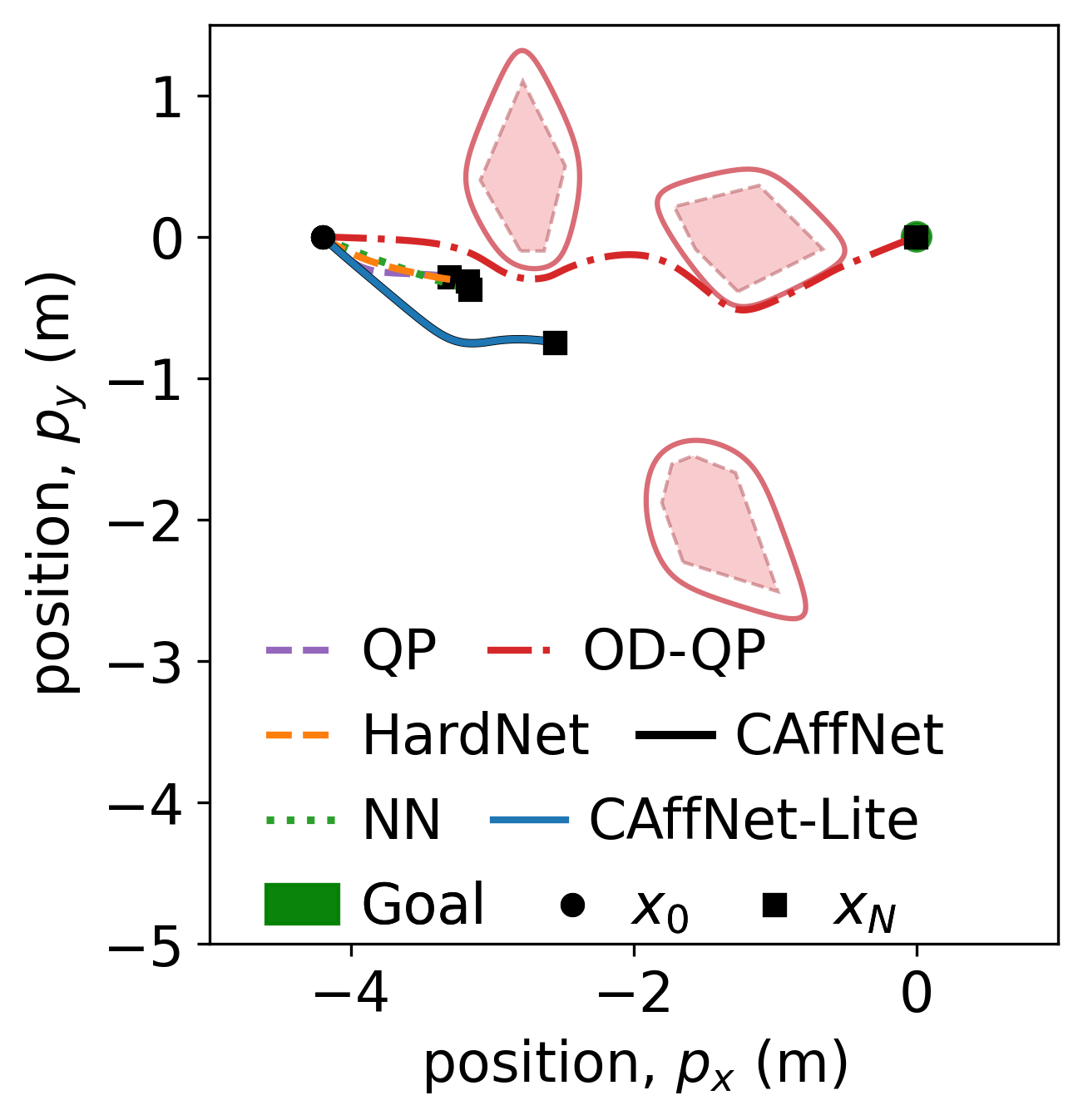}
        \caption[]{\small $\alpha(h)=0.1h$}
        \label{fig:single_int_traj_omega_0p1}
    \end{subfigure}\hfill
    \begin{subfigure}[t]{0.16\textwidth}
        \centering
        \includegraphics[width=\linewidth,trim=0mm 5mm 0mm 2mm]{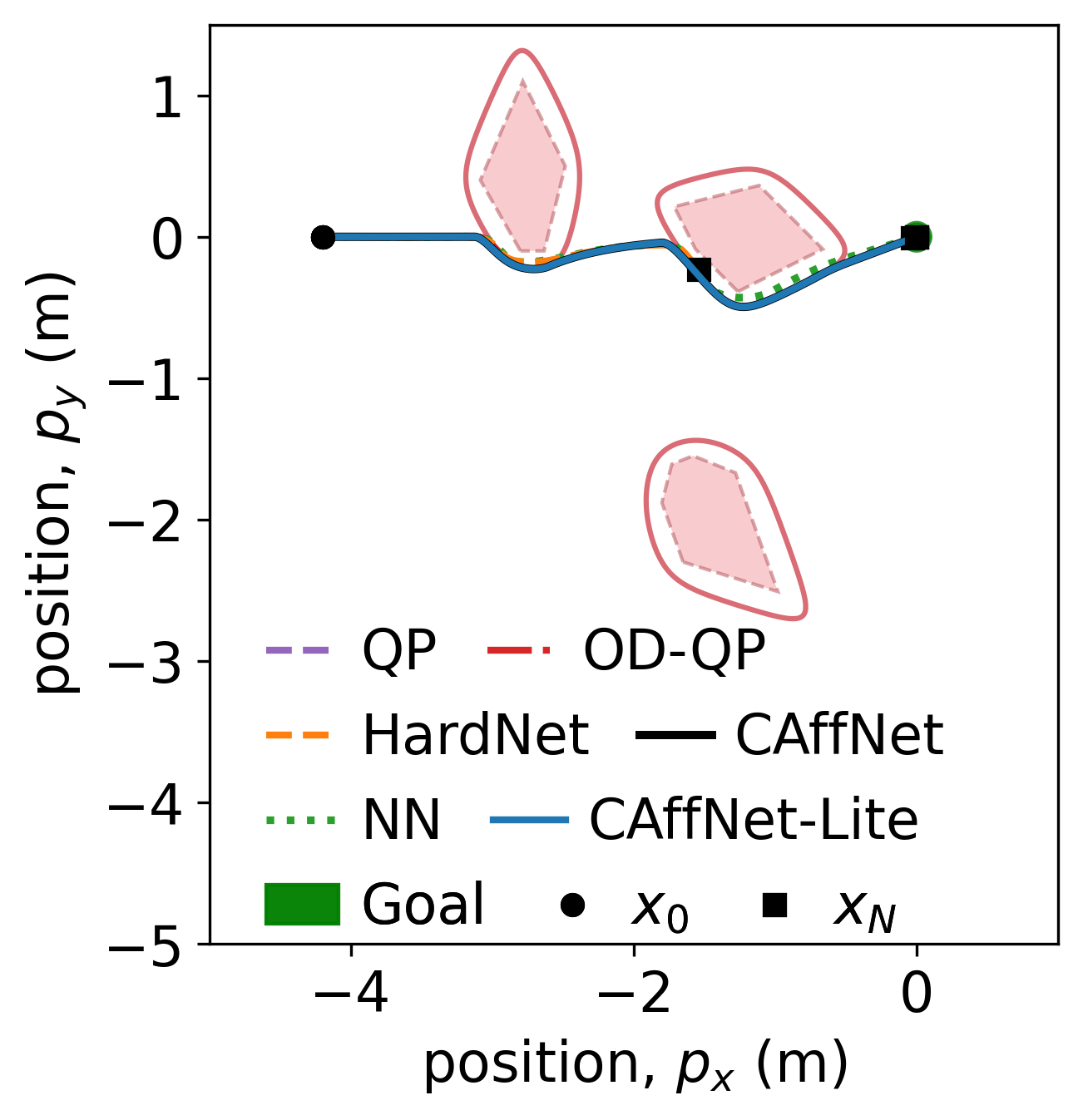}
        \caption[]{\small $\alpha(h)=10h$}
        \label{fig:single_int_traj_omega_10}
    \end{subfigure}\hfill
    \begin{subfigure}[t]{0.16\textwidth}
        \centering
        \includegraphics[width=\linewidth,trim=0mm 5mm 0mm 2mm]{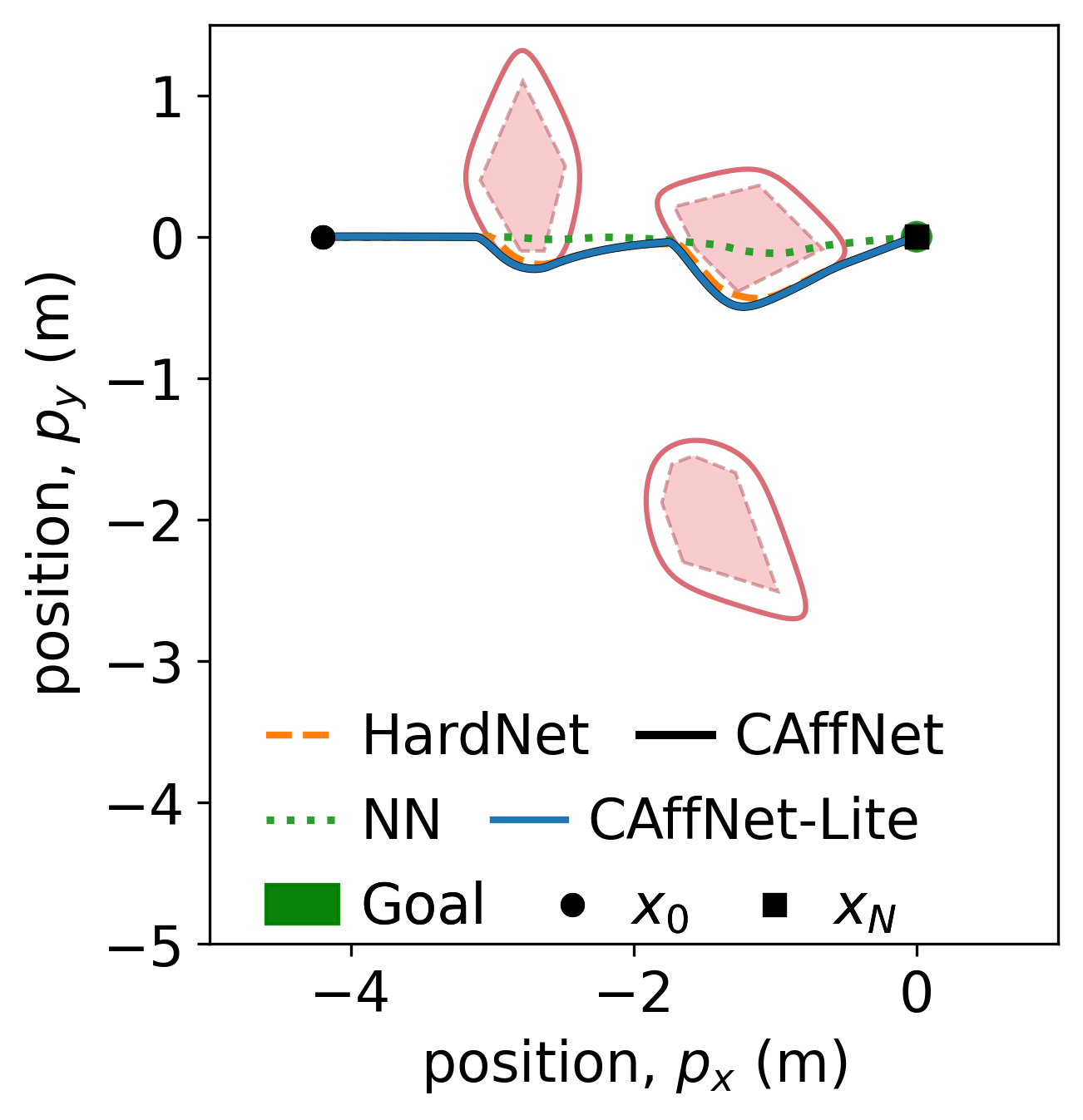}
        \caption[]{\small $\!\alpha_\vartheta(x,\!h) \!=\!   \tilde{\alpha}_\vartheta(x) h$}
        \label{fig:single_int_traj_omega_learned}
    \end{subfigure}
    \caption[]{Trajectories of various controllers with different fixed class-$\mathcal{K}$  functions $\alpha(h)$ or learned $\alpha_\vartheta(x,\!h)$ for a single integrator system.} 
    \label{fig:CBF_trajectory}
\end{figure}

\begin{figure*}[t]
    \centering
    \begin{subfigure}[t]{0.14\textwidth}
        \centering
        \includegraphics[width=\linewidth,trim=0mm 0mm 0mm 0mm]{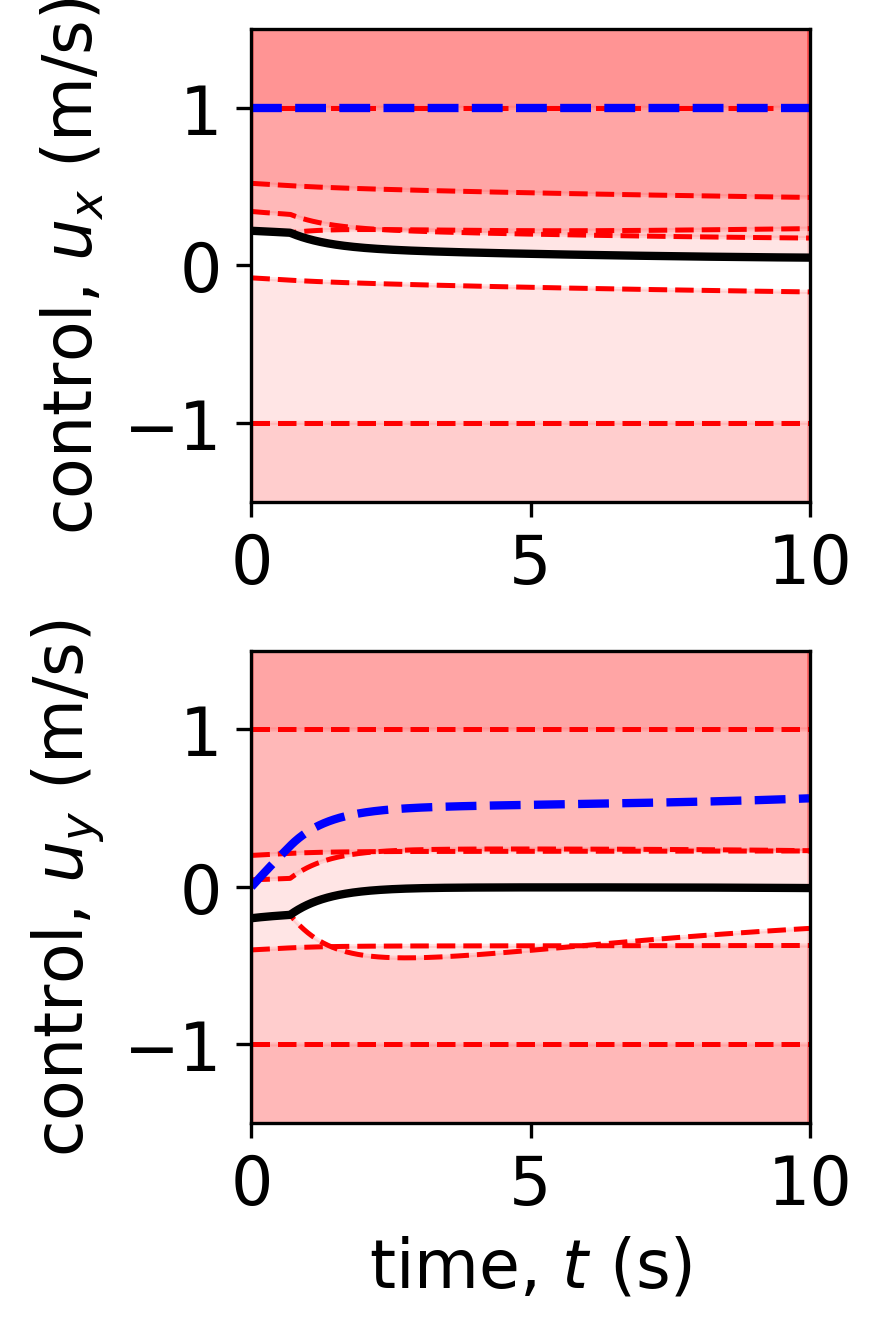}
        \caption[]{\small QP ($\omega=0.1$)}
        \label{fig:single_int_qp_0p1_u}
    \end{subfigure}\hfill
    \begin{subfigure}[t]{0.14\textwidth}
        \centering
        \includegraphics[width=\linewidth,trim=0mm 0mm 0mm 0mm]{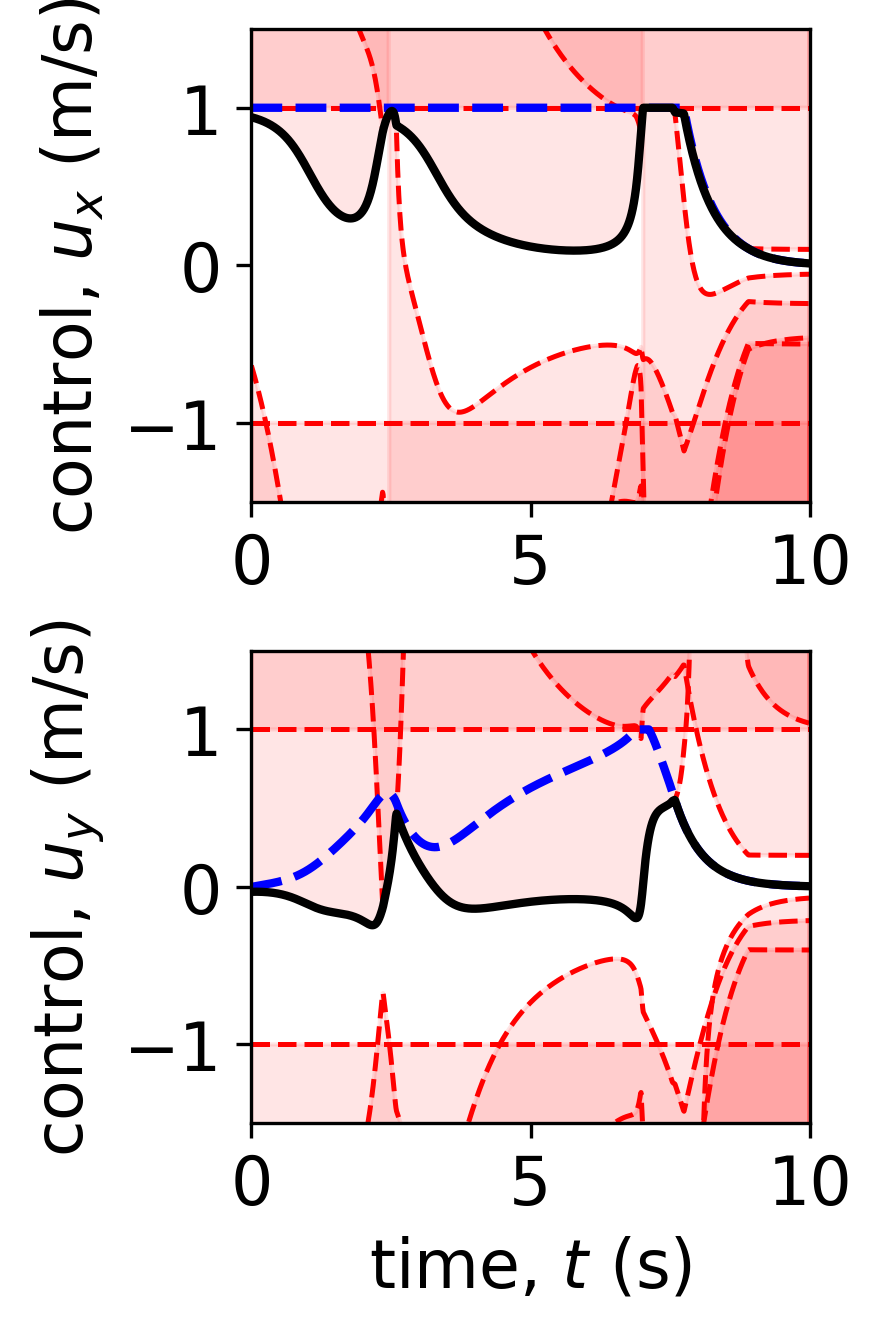}
        \caption[]{\small \!OD-QP \!\!($\omega\!=\!\!0.1$)\!}
        \label{fig:single_int_od_qp_0p1_u}
    \end{subfigure}\hfill
    \begin{subfigure}[t]{0.14\textwidth}
        \centering
        \includegraphics[width=\linewidth,trim=0mm 0mm 0mm 0mm]{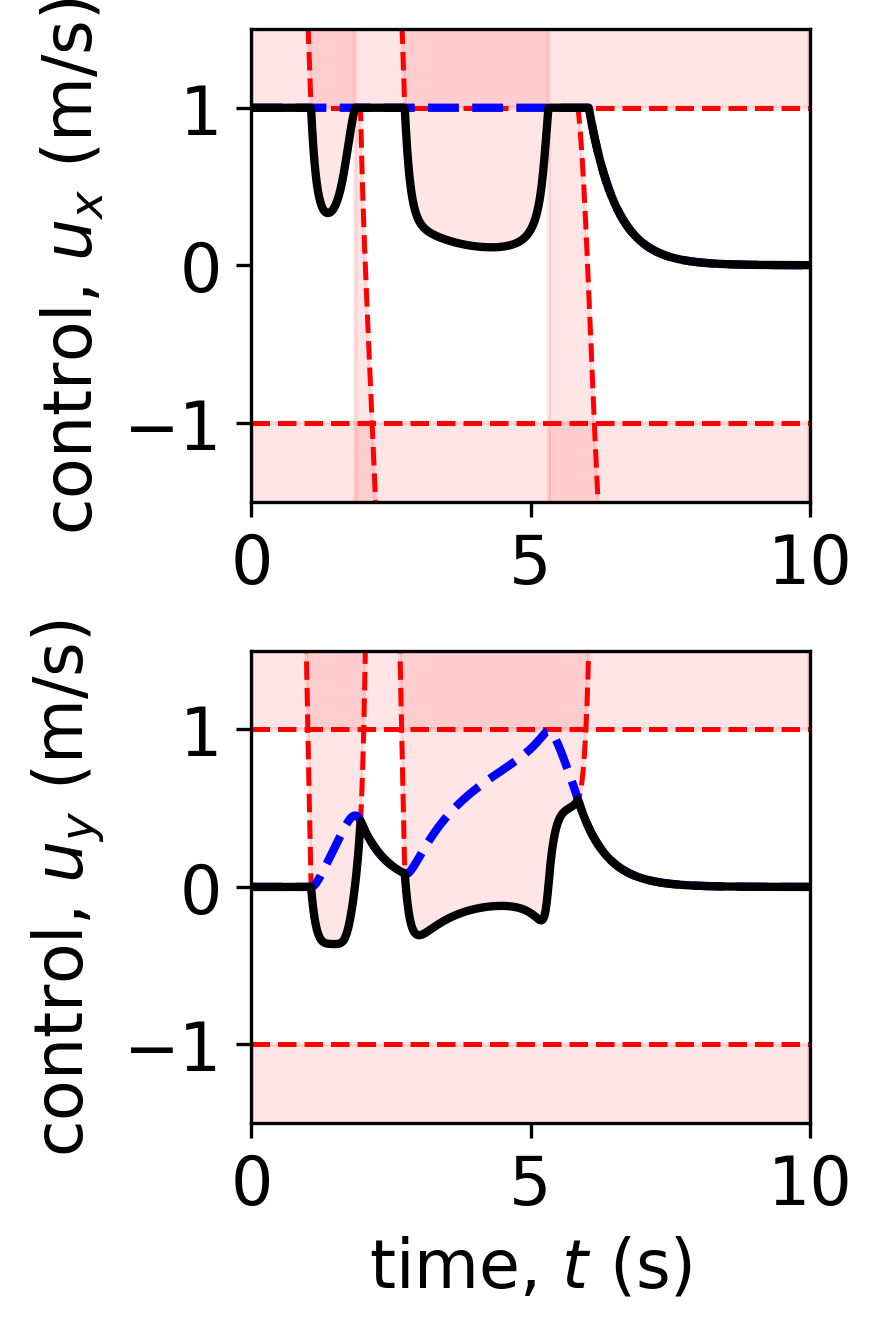}
        \caption[]{\small OD-QP \!($\omega\!=\!10$)}
        \label{fig:single_int_od_qp_10_u}
    \end{subfigure}\hfill
    \begin{subfigure}[t]{0.14\textwidth}
        \centering
        \includegraphics[width=\linewidth,trim=0mm 0mm 0mm 0mm]{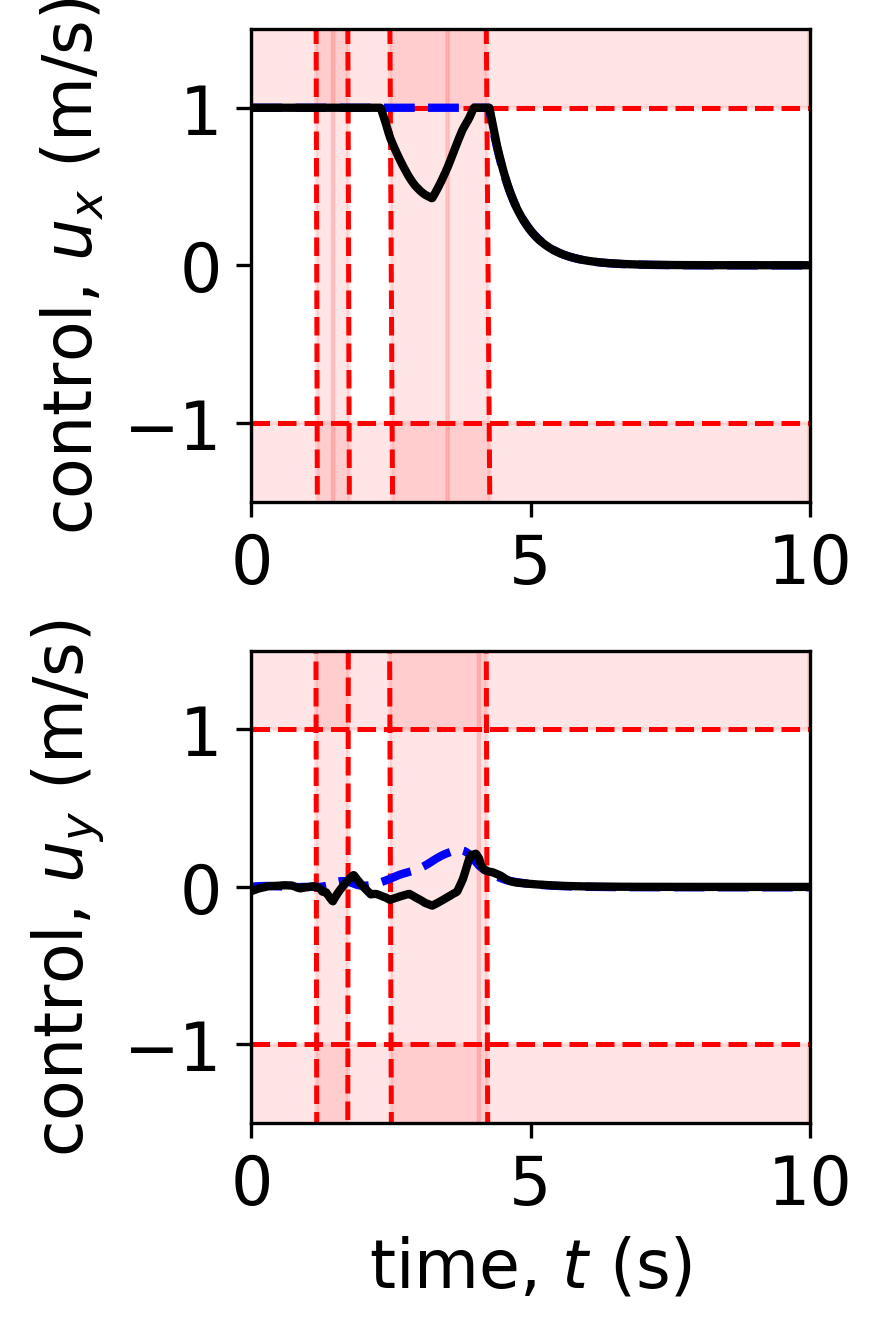}
        \caption[]{\small NN ($\tilde{\alpha}_\vartheta$)}
        \label{fig:single_int_nn_alpha_u}
    \end{subfigure}\hfill
    \begin{subfigure}[t]{0.14\textwidth}
        \centering
        \includegraphics[width=\linewidth,trim=0mm 0mm 0mm 0mm]{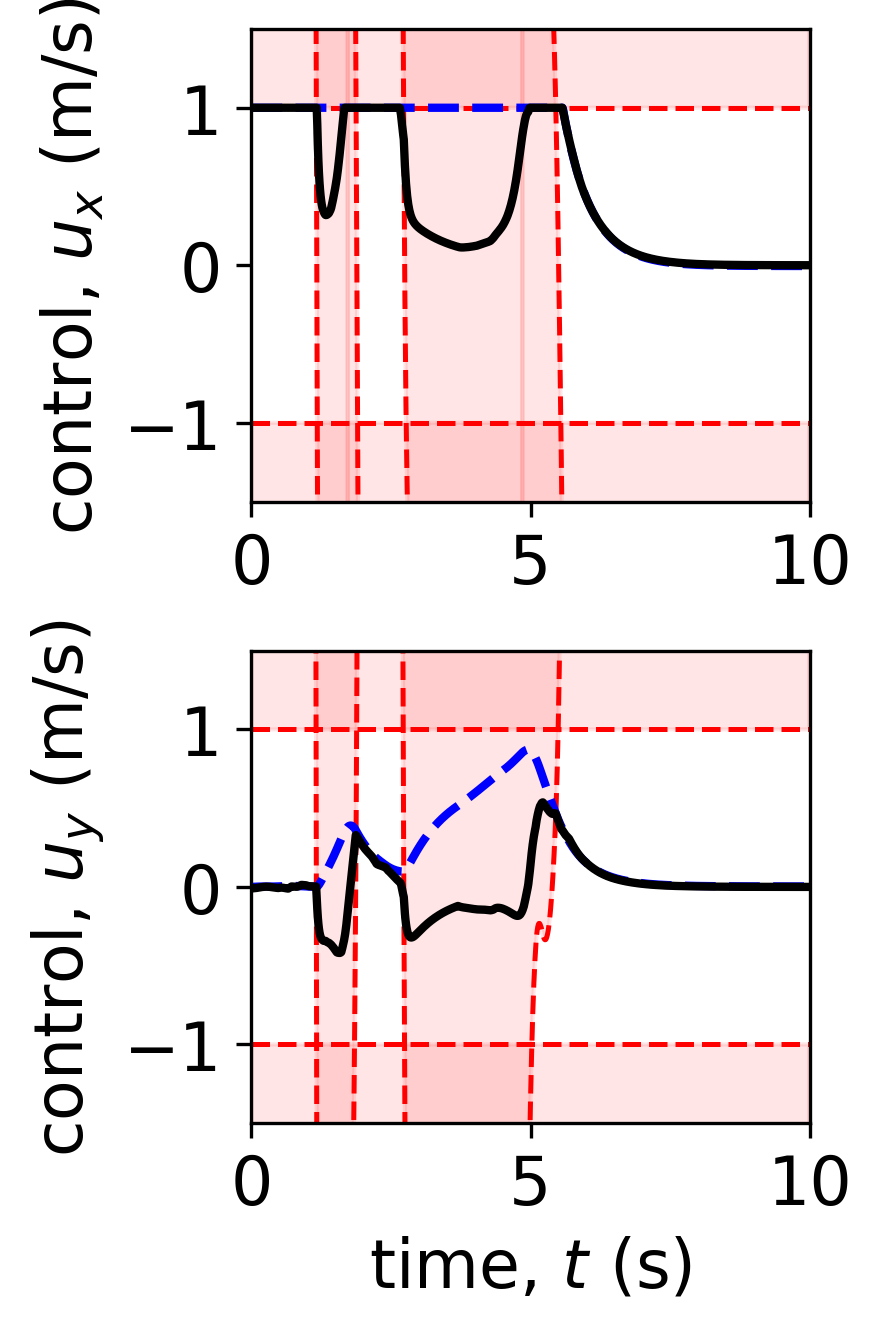}
        \caption[]{\small HardNet ($\tilde{\alpha}_\vartheta$)}
        \label{fig:single_int_hardnet_alpha_u}
    \end{subfigure}\hfill
    \begin{subfigure}[t]{0.14\textwidth}
        \centering
        \includegraphics[width=\linewidth,trim=0mm 0mm 0mm 0mm]{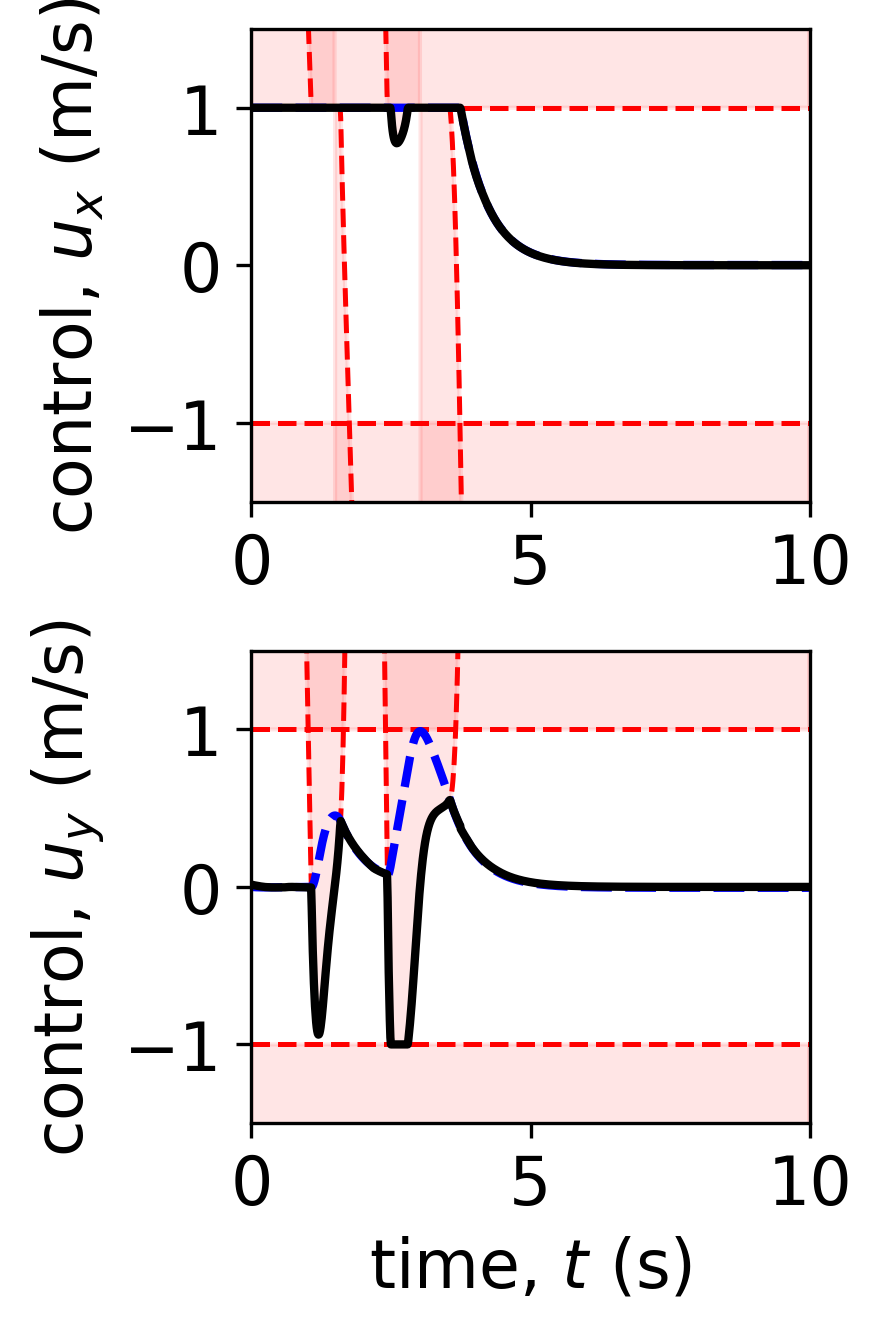}
        \caption[]{\small CAffNet ($\tilde{\alpha}_\vartheta$)}
        \label{fig:single_int_caffnet_alpha_u}
    \end{subfigure}\hfill
    \begin{subfigure}[t]{0.14\textwidth}
        \centering
        \includegraphics[width=\linewidth,trim=0mm 0mm 0mm 0mm]{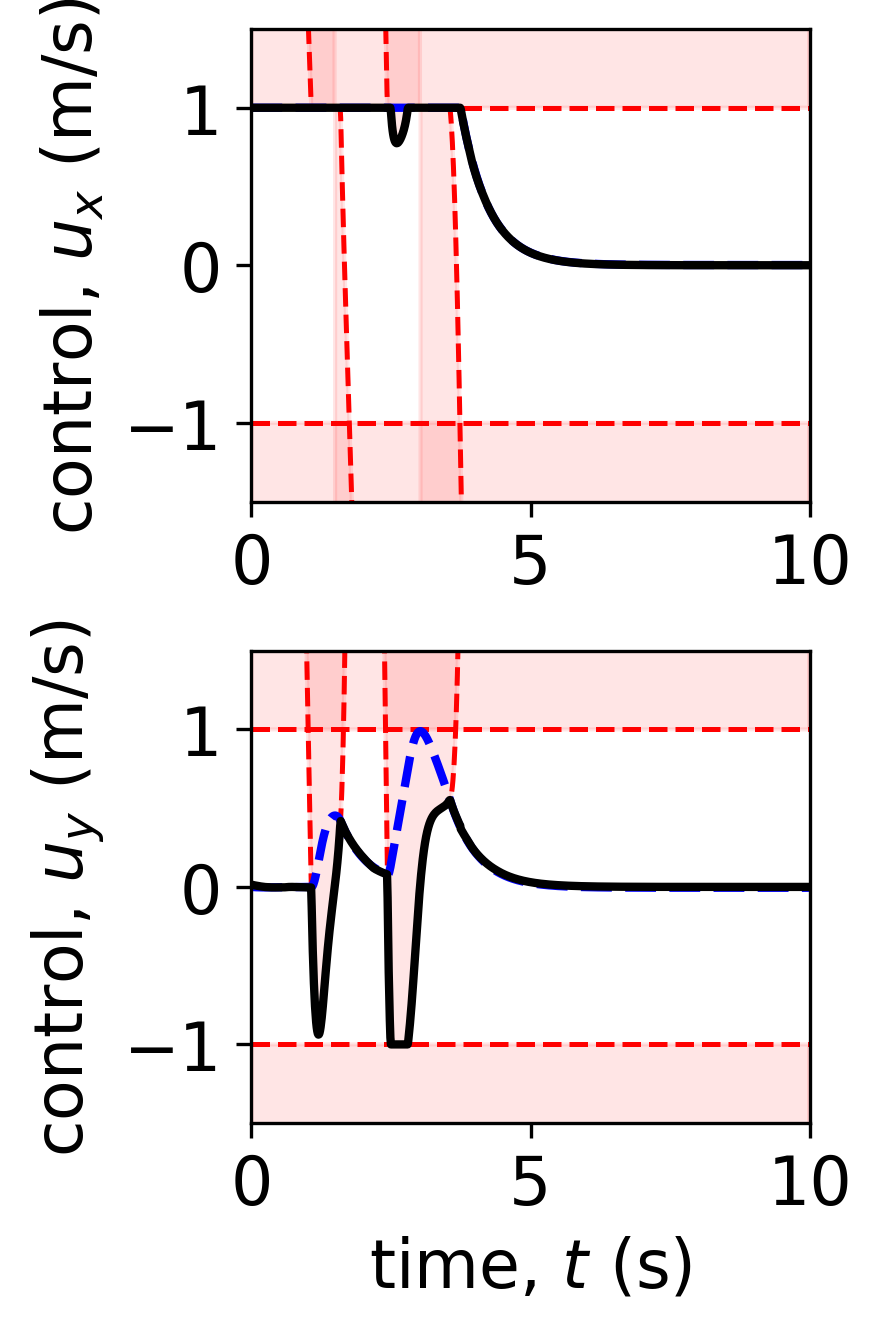}
        \caption[]{\small CAffNet-L.\! \!($\tilde{\alpha}_\vartheta$)}
        \label{fig:single_int_caffnet_lite_alpha_u}
    \end{subfigure}
    \caption[]{\yang{Control inputs for the single-integrator system under QP, optimal-decay QP (OD-QP), and neural-network-based methods. Parentheses indicate the CBF parameter used in each method: fixed $\omega=0.1$ or $\omega=10$, or learned $\tilde{\alpha}_\vartheta$.} \sy{The (unsafe) nominal control inputs are depicted with dashed blue lines while the modified control inputs from the various methods are depicted with solid black lines. The red shaded regions depict unsafe control inputs corresponding to input bounds and the resulting  inputs that violate CBF conditions.}}\vspace{-0.2cm}
    \label{fig:single_int_control_inputs}
\end{figure*}

First, we consider a single integrator system ($\dot{x} = u$), 
where the state $x = [p_x, p_y]^T \in \mathbb{R}^2$ denotes the position, and $u=[v_x, v_y]^T \in \mathbb{R}^2$ represents the velocities in the $x$ and $y$ directions. The state constraints are $p_x \in [-5, 1]$ and $p_y \in [-4, 2]$, which can be enforced by $h_x(x) = b_x - A_x x \ge 0$. The control constraints are $v_x, v_y \in [-1, 1]$. In addition, we consider three convex polyhedral obstacles with the hyperplane representation, $\mathcal{O}_j = \{x \in \mathbb{R}^2 \mid A_j x \le b_j\}, j \in \{1,2,3\}$. For each obstacle $\mathcal{O}_j$, the safety condition for its $i$-th edge is $h_j^i = a_j^i x - b_j^i \geq 0, \forall i \in \{1,\dots,n_j\}$, where $n_j$ denotes the number of edges. To enforce safety satisfaction for all edges of each obstacle, we use the smooth union function in \cite{molnar2023composing} to combine them into a single CBF, i.e.,
    $h_j(x) = \tfrac{1}{\kappa} \ln \left( \sum_{i=1}^{n_j} e^{\kappa h_j^i(x)} \right) - \tfrac{\ln n_j}{\kappa}$,
where $\kappa=10$ is a positive parameter that controls the smoothness of the union function. The corresponding Lie derivatives for $h_j(x)$ are $L_f h_j(x) = \sum_{i=1}^{n_j} \lambda_j^i(x) L_f h_j^i(x)$ and $L_g h_j(x) = \sum_{i=1}^{n_j} \lambda_j^i(x) L_g h_j^i(x)$, with $\lambda_j^i(x) = \mathrm{e}^{\kappa (h_j^i(x) - h_j(x))}$.
\yang{We assume that the composed smooth CBF remains a valid CBF for the corresponding obstacle-avoidance safe set.}
The safe set for obstacle avoidance is defined as $\mathcal{C} = \{x \in \mathbb{R}^2 \mid h_j(x) \geq 0, j \in \{1,2,3\}\}$, which can be enforced by
\begin{align*}
        L_f  h_j(x)+ L_g h_j(x) u(x)
    \ge  -\alpha(h_j(x)).
\end{align*}
\indent In this simulation, 
we compare the performance of QP, 
\yang{optimal-decay QP (OD-QP)}, NN, \yang{HardNet, CAffNet, and CAffNet-Lite. 
For QP and OD-QP, we set $\alpha(h(x)) = \omega h(x)$ with $\omega \in \{0.1,10\}$ and for \yangz{NN}, HardNet, CAffNet, and CAffNet-Lite, we compare both the case with $\alpha(h(x)) = \omega h(x)$ with $\omega \in \{0.1,10\}$ and the case 
with a neural network parameterization $\alpha_\vartheta(x, h(x)) =   \tilde{\alpha}_\vartheta(x) h(x)$.}
\yang{Each controller network uses ReLU activation functions and three hidden layers with $200$ neurons per layer.} The neural network used in $\tilde{\alpha}_\vartheta(x)$ is a fully connected network with three hidden layers of 64 neurons each. The training samples are randomly generated within the safe set, and the nominal command is
the unsafe proportional control $u_{nom} = k_p(x_{ref} - x)$, where $k_p = 2$, and $x_{ref} = [0, 0]^T$ is the goal position. 
Moreover, nominal and actual control inputs are saturated by the control constraints.
We set the training epochs to \yang{10000} in this simulation, and trajectories with $T_{test} = \yang{10}s$ \yang{and $dt = 0.01s$} are shown in Fig.~\ref{fig:CBF_trajectory}. 

With a fixed \yang{poorly tuned} CBF parameter \sy{(Fig.~\ref{fig:CBF_trajectory}(a))}, all methods, \sy{except OD-QP that has an adaptable parameter,} are unable to achieve the goal position due to the conservativeness of the chosen parameter. \sy{On the other hand, with a good choice of CBF parameter (Fig.~\ref{fig:CBF_trajectory}(b)), all methods can reach the goal position, but NN and HardNet violate the safe state constraints. Moreover, with jointly learned CBF parameter $\alpha_\vartheta(x,h(x))$, \yang{CAffNet and CAffNet-Lite} can successfully arrive at the goal region, while strictly satisfying all safety constraints during the entire trajectory. Furthermore, from ~\ref{fig:single_int_control_inputs}, we can observe that NN and HardNet, even with the learned CBF parameter $\alpha_\vartheta(x,h(x))$, violate safe input constraints, whereas OD-QP is observed to require more frequent safety interventions (i.e., deviations from the nominal controller) than CAffNet and CAffNet-Lite. The same observation can be gleaned from}
Table~\ref{CBF-table}, where \yang{CAffNet and CAffNet-Lite achieve the lowest mean cost}, \yang{even lower than with a relatively well-tuned CBF parameter}, while strictly satisfying all constraints \sy{and maintaining low computational cost}. 
\yang{This improvement occurs because the learned CBF term allows CAffNet to choose less conservative controls while still satisfying the safety constraints, so the trajectory reaches the goal earlier and stays closer to the unsafe nominal controller.}
Both NN and HardNet have some constraint violations, and QP with fixed CBF parameters has a relatively higher cost and 
a longer testing time. 
The result shows that system behavior may be unexpected if the CBF parameter is not well designed. However, with the proposed framework, the CBF parameter can be learned together with the control policy, which can significantly improve the performance while strictly satisfying all safety constraints.

\subsection{\sy{Geofencing of a} Fixed-Wing Aircraft}

In this example, we consider a fixed-wing aircraft model and control in \cite{molnar2025collision}, which has a higher control input dimension, to show the efficiency improvement of CAffNet-Lite compared to the original CAffNet. The state of the system is defined as $x = [n, e, d, \phi, \theta, \psi, V_T]^\top \in \mathbb{R}^7$, where $n$, $e$, and $d$ represent the north, east, and down positions, $\phi$, $\theta$, and $\psi$ are the roll, pitch, and yaw angles, and $V_T$ is the speed. The control input is defined as $u = [A_T, P, Q]^\top \in \mathbb{R}^3$, where $A_T$ is the acceleration, $P$  the roll rate, and $Q$  the pitch rate. 

In this example, the safety condition is constructed by enforcing geofences $h_{p,i}(r) \ge 0, \forall t \ge 0$, where
\begin{align*}
    h_{p,i}(r) = n_i^\top (r - r_i) - \rho_i, \quad \dot{h}_{p,i}(r) = n_i^\top v,
\end{align*}
with a center $r_i$ and a normal vector $n_i$. For brevity, the readers are referred to \cite{molnar2025collision} for 
the system dynamics
\yangz{~\cite[Eq. (8)]{molnar2025collision}}, (smoothed) CBF condition
\yangz{~\cite[Eq. (47)]{molnar2025collision}}
, and other details. 
To train the neural network, we use the $v_s$ 
\yangz{in}
\cite[Eq. (48)]{molnar2025collision} as the nominal controller, and train CAffNet and CAffNet-Lite to learn this nominal command with MSE loss. \yang{Both networks use ReLU activation functions and 16 hidden layers with 256 neurons per layer.} Note that the nominal controller is computed with the smooth function; however, in CAffNet and CAffNet-Lite, we use the original CBF conditions without approximating them. The training samples are generated by randomly initializing 10 safe states and simulating with the nominal controller for $T_{test}=300s$. We trained both networks for \yang{1000} epochs.
From the trajectories in Fig.~\ref{fig:aircraft_all} and velocity command in Fig.~\ref{fig:aircraft_u}, both methods deviate the aircraft from the desired trajectory to avoid geofences. The evaluation result is shown in Table~\ref{tab:aircraft_results}, where CAffNet-Lite uses lower training and testing time than CAffNet, which demonstrates the efficiency improvement of CAffNet-Lite compared to the original CAffNet, making it more suitable for real-time applications and higher-dimensional systems.
However, due to more projection candidates in CAffNet, it converges faster than CAffNet-Lite under the same training epoch, leading to a lower mean cost. 

\begin{figure}[t]
    \centering
    \includegraphics[width=0.6\linewidth]{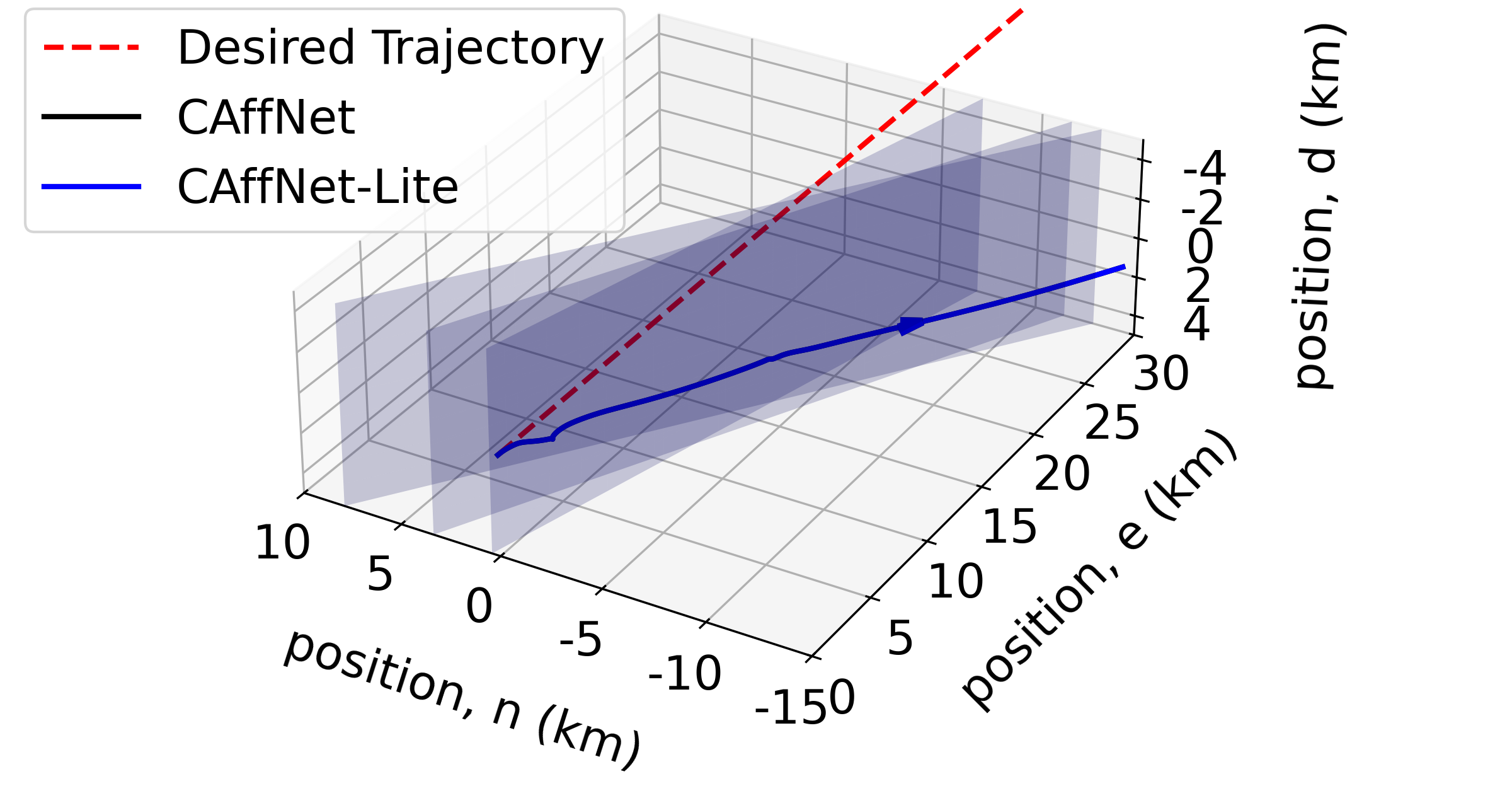}%
    \caption[]{\small Trajectories of CAffNet and CAffNet-Lite for the aircraft system, where both successfully avoided the geofences.} 
    \label{fig:aircraft_all}
\end{figure}

\begin{figure}[t]
    \centering
    \begin{subfigure}[t]{0.49\textwidth}
        \centering
        \includegraphics[width=\linewidth,trim=5mm 3mm 0mm 0mm]{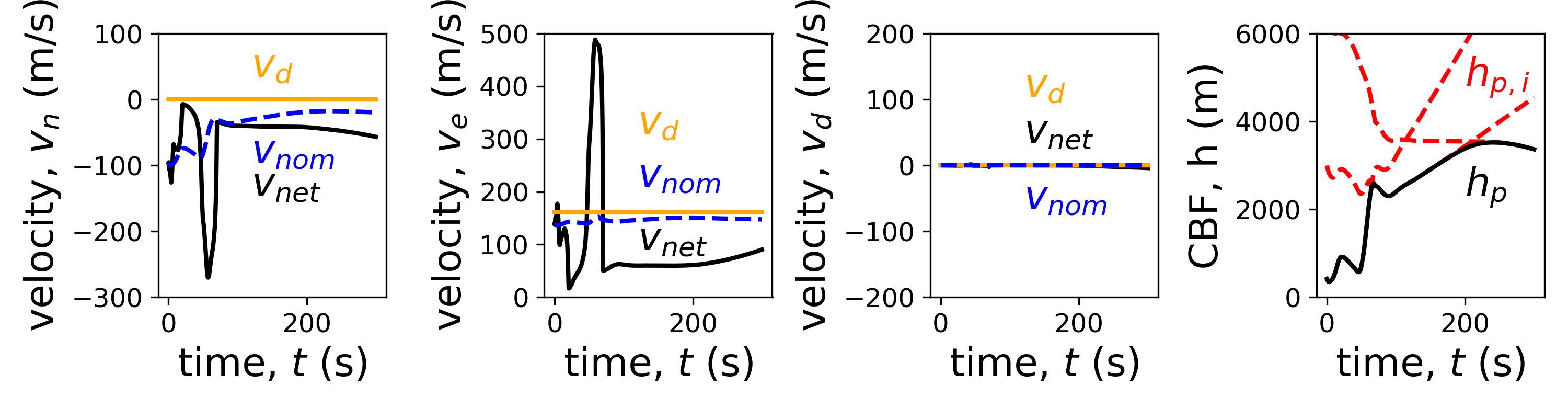}
        \caption[]{\small Safe velocity command and CBFs of CAffNet.}
        \label{fig:aircraft_CAff_u}
    \end{subfigure}\vfill
    \begin{subfigure}[t]{0.49\textwidth}
        \centering
        \includegraphics[width=\linewidth,trim=5mm 3mm 0mm 0mm]{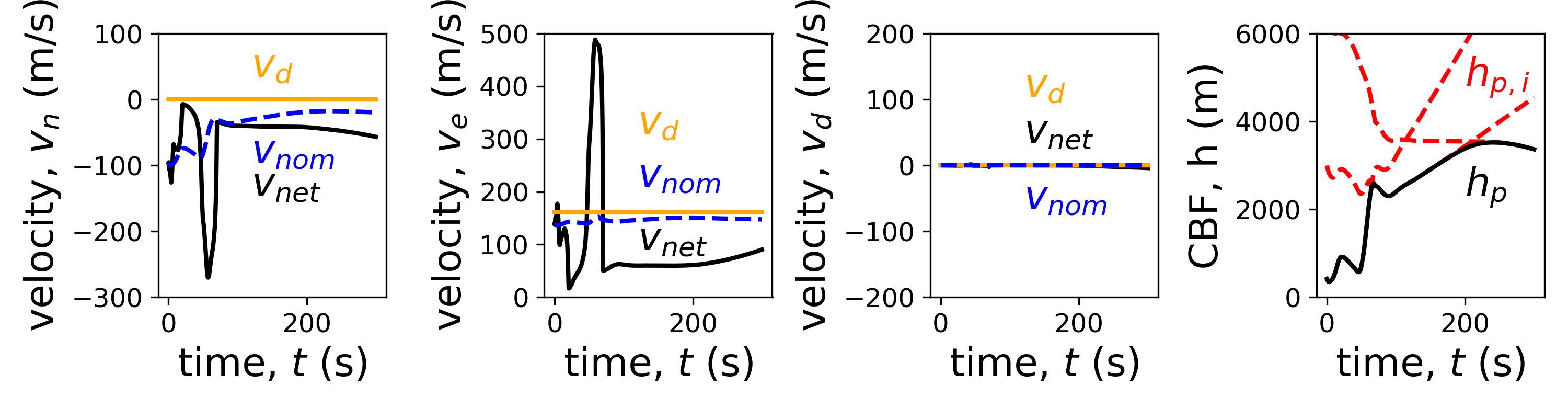}
        \caption[]{\small Safe velocity command and CBFs of CAffNet-Lite.}
        \label{fig:aircraft_LiteCAff_u}
    \end{subfigure}\hfill
    \caption[]{Safe velocity command and CBF values of CAffNet and CAffNet-Lite for the aircraft system. Both methods can successfully learn the safe velocity command that satisfies all CBF conditions.}
    \label{fig:aircraft_u}
\end{figure}

\begin{table}[t]
  \caption{Results of learning safe control policies for the aircraft system \sy{with parenthetical values  representing standard deviations}. Both CAffNet and CAffNet-Lite satisfy all constraints. CAffNet-Lite has lower training and testing time than CAffNet.} 
  \label{tab:aircraft_results}
  \begin{center}
    \begin{small}
      \begin{sc}
\resizebox{\linewidth}{!}{%
\begin{tabular}{lcccccc}
\toprule
Method & Cost & \multicolumn{3}{c}{Inequality violation} & $T_{train}$ & $T_{test}$ \\
\cmidrule(lr){3-5}
 &  & Max & Mean & Num (\%) & (ms) & (s) \\
\midrule
CAffNet-FF & $\mathbf{3.82 \times 10^7}$ & \textbf{0.00} & \textbf{0.00} & \textbf{0.00\%} & 118.67 & 25.39 \\
 & $(1.37 \times 10^7)$ & (0.00) & (0.00) & (0.00\%) & (1.12) & (0.50) \\
CAffNet-Lite & $\mathbf{3.82 \times 10^7}$ & \textbf{0.00} & \textbf{0.00} & \textbf{0.00\%} & \textbf{85.52} & \textbf{21.07} \\
 & $(1.37 \times 10^7)$ & (0.00) & (0.00) & (0.00\%) & (0.65) & (0.30) \\
\bottomrule
\end{tabular}
}
      \end{sc}
    \end{small}
  \end{center}
\end{table}

\section{Conclusion}
We presented CAffNet-Lite, a safe-by-design neural control framework that jointly learns a neural network controller alongside neural-network-parameterized CBF conditions. The proposed method guarantees hard satisfaction of control-affine constraints and strong forward invariance of the safe set by construction, without requiring a separate QP-based safety filter. A lightweight constraint decomposition strategy improves scalability over the prior CAffNet approach while preserving feasibility and universal approximation guarantees. Future work will consider extending the framework to broader classes of constraints and evaluating performance on real-world systems.

\bibliographystyle{IEEEtran}
\bibliography{references}

\end{document}